\documentclass[10pt]{article}

\usepackage{ifpdf}
\ifpdf\else
  \PassOptionsToPackage{dvipdfmx}{graphicx}
  \PassOptionsToPackage{dvipdfmx}{xcolor}
  \PassOptionsToPackage{dvipdfmx}{hyperref}
\fi

\usepackage[margin=1in]{geometry}
\usepackage[T1]{fontenc}
\usepackage{amsmath}
\usepackage{amssymb}
\usepackage{graphicx}
\usepackage{cite}
\usepackage{color}
\usepackage{comment}
\usepackage{algorithm}
\usepackage{algpseudocode}
\usepackage{amsthm}
\usepackage{thm-restate}
\usepackage{authblk}
\usepackage{hyperref}

\newtheorem{theorem}{Theorem}
\newtheorem{lemma}{Lemma}
\newtheorem{corollary}{Corollary}
\newtheorem{proposition}{Proposition}

\newcommand{\thmcite}[1]{[\citen{#1}]}
\newcommand{\keywords}[1]{\par\smallskip\noindent\textbf{Keywords: }{\def\and{; }#1}}

\newcommand{\str}{\mathsf{str}}
\newcommand{\STree}{\mathsf{STree}}
\newcommand{\LCST}{\mathsf{LCST}}
\newcommand{\lastPos}{\mathit{lastPos}}
\newcommand{\append}{\mathsf{append}}
\newcommand{\find}{\mathsf{find}}

\begin{document}

\title{Online computation of maximal closed substrings}

\author[1]{Hiroki~Shibata}
\author[2]{Haruki~Umezaki}
\author[3]{Takuya~Mieno}
\author[4]{Yuto~Nakashima}
\author[4]{Shunsuke~Inenaga}

\affil[1]{Joint Graduate School of Mathematics for Innovation, Kyushu University, Japan\\
\texttt{shibata.hiroki.753@s.kyushu-u.ac.jp}}
\affil[2]{Department of Information Science and Technology, Kyushu University, Japan\\
\texttt{umezaki.haruki.314@s.kyushu-u.ac.jp}}
\affil[3]{Graduate School of Informatics and Engineering, University of Electro-Communications, Japan\\
\texttt{tmieno@uec.ac.jp}}
\affil[4]{Department of Informatics, Kyushu University, Japan\\
\texttt{\{nakashima.yuto.003,inenaga.shunsuke.380\}@m.kyushu-u.ac.jp}}
\date{}

\maketitle

\begin{abstract}
A non-empty string is closed if it has length one or its longest border appears exactly twice in the string.
An occurrence of a closed substring is a \emph{maximal closed substring (MCS)} if it cannot be extended to the left or to the right while preserving closedness.
MCSs can be regarded as a general class of maximal repetitive structures including runs.
In this paper, we study the computation of MCSs of a string given in an \emph{online} manner, where one character is appended to the string at a time.
Our algorithm detects newly formed MCSs after each append operation by using the rightmost previous occurrence of each suffix.
To support this efficiently, we introduce the \emph{link-cut suffix tree (LCST)}, a novel data structure combining an online suffix tree with a link-cut tree.
The LCST maintains rightmost occurrence information for substrings represented in the suffix tree in $O(n \log n)$ total time and $O(n)$ space, where $n$ is the length of the input string.
Using the LCST, we obtain an $O(n \log n)$-time online algorithm for computing all MCSs, which is worst-case optimal.
As further direct applications of the LCST, we obtain online algorithms for rightmost LZ77 factorizations and most recent match queries.

\keywords{suffix trees \and dynamic trees \and online string algorithms \and closed words \and maximal closed substrings}
\end{abstract}

\section{Introduction}\label{sec:introduction}

Repetitions are among the most fundamental structures in strings.
Runs, also called maximal repetitions, are a central example and have been studied extensively~\cite{MainL84,KolpakovK99,BannaiIINTT17,Ellert021,Czarkowski26}.
The notion of \emph{maximal closed substrings} (MCSs) was introduced to capture maximal repetitive structures beyond periodic repetitions~\cite{Badkobeh0FP22,BadkobehLFP26}.
A non-empty string is \emph{closed} if it has length one or its longest border appears exactly twice.
An occurrence of a closed substring is an MCS if it cannot be extended to the left or to the right while preserving closedness.
For example, a substring $T[1.. 5] = \mathtt{abbab}$ of string $T = \mathtt{aabbaba}$ is an MCS of $T$ since $\mathtt{abbab}$ is closed (because its longest border $\mathtt{ab}$ occurs exactly twice) and both its left extension $T[0.. 5] = \mathtt{aabbab}$ and its right extension $T[1..6] = \mathtt{abbaba}$ are not closed.
By definition, the set of MCSs subsumes the set of runs.

The combinatorics and computation of MCSs have recently become active topics~\cite{Badkobeh0FP22,BadkobehLFP26,closedrepeats,JainM25}.
It is known that the number of MCSs in a string of length $n$ is $O(n \log n)$, and all MCSs can be computed within the same time bound~\cite{Badkobeh0FP22,BadkobehLFP26}.
MCSs have also been studied under the name \emph{closed repeats}~\cite{closedrepeats}.
It is shown in~\cite{closedrepeats} that the maximum numbers of left closed repeats and right closed repeats are both $\Theta(n \log n)$, and that a suitable data structure storing all MCSs supports efficient substring queries such as longest substring repeat queries and substring compression queries.

In this paper, we study the online computation of MCSs, where characters are appended one by one to the text.
Since the existing offline algorithms compute MCSs using a preconstructed data structure for the entire string, they cannot be applied directly in the online setting.
The idea of our online algorithm is as follows:
we maintain the rightmost previous occurrences of all suffixes and
use this information to check the closedness and maximality of suffixes.
To realize this idea, we introduce the \emph{link-cut suffix tree} ($\LCST$), a data structure that combines a suffix tree~\cite{Weiner73} with a link-cut tree~\cite{SleatorT83}.
The $\LCST$ can be used as a data structure for maintaining the rightmost occurrence of each substring. It can be maintained in an online manner in total $O(n \log n)$ time and $O(n)$ space.

Our main application of $\LCST$ is the online computation of MCSs.
Although an online algorithm is known for \emph{counting} distinct closed substrings~\cite{MienoTSH25}, previous algorithms for \emph{reporting} MCSs work only in the offline setting.
We give the first $O(n \log n)$-time online algorithm for computing MCSs, matching the time complexity of the offline algorithms.
As direct applications of $\LCST$, we also consider online rightmost LZ77 factorization, online most recent match queries, and online non-overlapping rightmost LZ77 factorization, which can all be solved directly with $\LCST$.
More precisely, we obtain an $O(n \log n)$-time and $O(n)$-space online algorithm for rightmost LZ77 factorization (both overlapping and non-overlapping), and an online data structure for most recent match queries with $O(n \log n)$ total update time and $O(m \log \sigma)$ query time, where $m$ is the pattern length and $\sigma$ is the alphabet size.
The formal definitions and related work for rightmost LZ77 factorizations and most recent match queries will be discussed in Section~\ref{sec:application}.

 \section{Preliminaries}\label{sec:pre}
\subsection{Strings}
Let $\Sigma$ be a general ordered alphabet of size $\sigma$.
An element in $\Sigma$ is called a character.
An element in $\Sigma^*$ is called a string.
Let $T \in \Sigma^n$ be a string of length $n \ge 0$.
The length of $T$ is denoted by $|T|$.
We denote by $\varepsilon$ the empty string,
which is the string with $|\varepsilon| = 0$.
If $T = xyz$ holds for some strings $x, y, z \in \Sigma^*$,
$x, y,$ and $z$ are called a prefix, a substring, and a suffix of $T$, respectively.
A prefix $b$ of $T$ is called a border of $T$ if $b \ne T$ and $b$ is also a suffix of $T$.
We say $T$ has a border $b$ if $b$ is a border of $T$.
Every non-empty string has $\varepsilon$ as a border of length $0$.
We call a border of maximum length the longest border of $T$.

For a string $T$ and integers $i, j$ with $0 \le i \le j \le n-1$,
we denote by $T[i]$ the $i$th character of $T$,
and by $T[i.. j]$ the substring of $T$ that begins at position $i$ and ends at position $j$.
The reversal of a non-empty string $T$ is denoted by $T^R = T[n-1] \cdots T[0]$.
For convenience, we define $T[i.. j] = \varepsilon$ for any $i, j$ with $i > j$.
If $T[i.. j] = w$, then $w$ occurs in $T$ from position $i$ to position $j$.
For a non-empty string $w$, let $\lastPos_T(w) = \max \{ j \mid T[j - |w| + 1.. j] = w\}$ be the ending position of the last occurrence of $w$ in $T$ if $w$ occurs in $T$, and let $\lastPos_T(w) = \bot$ otherwise.
We define $\lastPos_T(\varepsilon) = |T|-1$ for convenience.

In the following, let $n = |T|$ and assume that $T$ starts with the unique sentinel character $\$$.

\subsection{Closed strings}
A non-empty string $w$ is said to be \emph{closed} if $|w| = 1$ or the longest border of $w$ occurs exactly twice in $w$.
If $|w| \ge 2$ and $w$ has no non-empty border, then $w$ is not closed.
Let $s$ be a non-empty substring of $T$ having two occurrences $T[i.. i + |s| - 1]$ and $T[j - |s| + 1.. j]$ for some $i, j$ with $i < j - |s| + 1$.
These two occurrences are called consecutive if $s$ has no occurrence in $T[i+1.. j-1]$.
Note that these two occurrences are consecutive if and only if $s$ is the longest border of $T[i.. j]$ and $T[i.. j]$ is closed.
A closed substring $T[i.. j]$ of $T$ is called \emph{left-maximal} if $i = 0$ or $T[i-1.. j]$ is not closed, and it is called \emph{right-maximal} if $j = n-1$ or $T[i.. j+1]$ is not closed.
A closed substring of $T$ is called a maximal closed substring (MCS) if it is both left-maximal and right-maximal.
An MCS that is a suffix of $T$ is called a maximal closed suffix (MCSuf) of $T$.

We use the following characterization of maximality of closed substrings.
\begin{restatable}{lemma}{closedmaximality}\label{lem:closed_maximality}
Let $s = T[\ell..r]$ be a closed substring, and let $m$ be the length of the longest border of $s$.
Then the following equivalences hold.
\begin{enumerate}
    \item $s$ is left-maximal if and only if $\ell = 0$ or $T[\ell-1] \ne T[r - m]$.
    \item $s$ is right-maximal if and only if $r = n-1$ or $T[\ell + m] \ne T[r+1]$.
\end{enumerate}
\end{restatable}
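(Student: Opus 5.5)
We prove item~1 by showing directly that, when $\ell>0$, the extension $s'=T[\ell-1..r]$ is closed exactly when $c:=T[\ell-1]$ equals $T[r-m]$. Item~2 then follows by the mirror argument, or by applying item~1 to $T^R$. Let $b=T[\ell..\ell+m-1]=T[r-m+1..r]$ be the longest border of $s$. Since $s$ is closed, $b$ occurs in $s$ exactly at these two positions; when $|s|=1$ we have $m=0$ and $b=\varepsilon$, and the argument below goes through unchanged with this convention.

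First we relate borders of $s'$ to borders of $s$. Let $x$ be a non-empty border of $s'$ of length $k+1$, and write $x=cy$. Then $x=T[\ell-1..\ell-1+k]=T[r-k..r]$. Dropping the first character of each occurrence shows that $y=T[\ell..\ell-1+k]=T[r-k+1..r]$ is a border of $s$, or $\varepsilon$. Hence $k\le m$, so every border of $s'$ has length at most $m+1$.

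($\Leftarrow$) Suppose $T[r-m]=c$. Then $cb$ is both a prefix and a suffix of $s'$, so it is a border of $s'$. By the length bound above it is the longest border. Now take any occurrence of $cb$ in $s'$ starting at position $p$. It yields an occurrence of $b$ in $s$ starting at $p+1$. Since $b$ occurs in $s$ only at $\ell$ and $r-m+1$, we get $p\in\{\ell-1,r-m\}$. So $cb$ occurs exactly twice in $s'$, and $s'$ is closed. Note that $|s'|\ge 2$, so the nontrivial case of the definition of closedness applies.

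($\Rightarrow$) Suppose $s'$ is closed. Because $|s'|\ge 2$, its longest border $x=cy$ is non-empty, with $|y|=k\le m$, and $x$ occurs exactly twice in $s'$. This is the main step: we rule out $k<m$ by exhibiting a third occurrence of $x$.
\begin{itemize}
\item Assume $k<m$. The suffix occurrence $T[r-k..r]$ of $x$ lies entirely inside the suffix occurrence $T[r-m..r]$ of $b$, shifted left by one; more precisely, positions $r-k..r$ all lie in $[r-m+1,r]$ when $k<m$.
\item Copying back through the equality $T[r-m+1..r]=T[\ell..\ell+m-1]$ gives $T[\ell+m-1-k..\ell+m-1]=x$.
\item This occurrence ends at $\ell+m-1$. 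Since $m>k$, we have $\ell+m-1>\ell-1+k$, the end of the prefix occurrence. Since $b$ is a proper border, $m<|s|$, so $\ell+m-1<r$, the end of the suffix occurrence.
\item Thus $x$ has a third occurrence in $s'$, a contradiction.
\end{itemize}
Therefore $k=m$ and $y=b$. The suffix occurrence of $x$ then gives $T[r-m]=c=T[\ell-1]$.

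Combining the two directions: for $\ell>0$, $s$ is left-maximal iff $s'$ is not closed iff $T[\ell-1]\ne T[r-m]$. The case $\ell=0$ holds by definition. Item~2 is the symmetric statement: the extension $T[\ell..r+1]$ is closed iff its longest border is $bT[r+1]$, which happens iff $T[\ell+m]=T[r+1]$.
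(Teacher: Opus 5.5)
Your proof is correct and is essentially the paper's argument mirrored: the paper proves the right-extension case directly and gets the left case by symmetry, while you prove the left case and get the right one by reversal. Beyond that, you use the same three ingredients (the new longest border has length at most $m+1$; the extended border occurs only twice because each occurrence contains a copy of $b$; a shorter border would have a third occurrence inside the other copy of $b$), and you additionally prove the length bound the paper only asserts and handle $m=0$ via the empty-string convention rather than as a separate case. One small wording slip: $T[r-m..r]$ is the suffix occurrence of $cb$, not of $b$, but your ``more precisely'' clause with $[r-m+1,r]$ states the correct containment.
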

\begin{proof}
If $m = 0$, then $s$ has length $1$ by the definition of closed strings and $r = \ell$.
If $\ell > 0$, then the left extension $T[\ell-1..r]$ has length $2$.
It is closed if and only if $T[\ell-1] = T[r]$.
Hence $s$ is left-maximal if and only if $\ell = 0$ or $T[\ell-1] \ne T[r]$.
This is exactly the first equivalence, since $m = 0$.
Similarly, if $r < n-1$, the right extension $T[\ell..r+1]$ has length $2$ and is closed if and only if $T[\ell] = T[r+1]$.
Thus $s$ is right-maximal if and only if $r = n-1$ or $T[\ell] \ne T[r+1]$.
This is exactly the second equivalence.
Assume that $m > 0$.

Let $b = s[0..m-1]$ be the longest border of $s$.
Since $s$ is closed, $b$ occurs in $s$ exactly twice, as a prefix and a suffix.
We note that appending one character to the beginning or the end of $s$ cannot increase the length of its longest border by two or more.

We now prove the statement for right-maximality.
If $r = n-1$, the statement follows from the definition of right-maximality.
Assume $r < n-1$.
If $T[\ell + m] = T[r+1]$, appending this common character to $b$ gives a border of $T[\ell..r+1]$ of length $m + 1$.
By the observation above, this border is the longest border of $T[\ell..r+1]$.
Moreover, it occurs exactly twice, because each occurrence contains an occurrence of $b$ in $s$.
Hence, $T[\ell..r+1]$ is closed, and $s$ is not right-maximal.

Conversely, suppose that $T[\ell..r+1]$ is closed, and let $d$ be its longest border.
By the observation above, $d$ has length at most $m + 1$.
If $d$ were shorter than this, then $d$ would occur as a prefix of $T[\ell..r+1]$, inside the suffix occurrence of $b$ in $s$, and as a suffix of $T[\ell..r+1]$.
This contradicts the assumption that $T[\ell..r+1]$ is closed.
Hence, $|d| = m + 1$, which implies $T[\ell + m] = T[r+1]$.
This proves the second statement.
The proof for left-maximality is symmetric.
\end{proof}
 \section{Link-cut suffix trees}\label{sec:linkcutstree}

\subsection{Suffix trees}
A suffix tree of $T$ is a compact trie of the set of suffixes of $T$~\cite{Weiner73}.
Every node $v$ in a suffix tree of $T$ represents a substring of $T$, denoted by $\str(v)$.
For any substring $w$ of $T$, the path from the root whose label is $w$ has a unique endpoint in the suffix tree.
We call this endpoint the \emph{locus} of $w$.
A locus is \emph{explicit} if it is a node, and \emph{implicit} if it is on an edge.
If the last character $T[n-1]$ is unique in $T$, a suffix tree of $T$ has exactly $n$ leaves corresponding to $n$ suffixes of $T$.
Precisely, for any position $i$ in $T$, there exists exactly one leaf $\ell$ such that $\str(\ell) = T[i..n-1]$, and vice versa.

This paper employs Weiner's online suffix tree construction algorithm~\cite{Weiner73} for the \emph{reversal} $T^R$ of the input text $T$.
We denote by $\STree(T)$ a suffix tree of the reversal $T^R$ of $T$.
Namely, $\STree(T)$ represents the set of \emph{reversed prefixes} of $T$.
The construction starts with $T = \$$ and appends $T[1], T[2], \ldots$ to the end of the text.
For example, when $T = \mathtt{\$ab}$, the reversed prefixes represented in $\STree(T)$ are $\mathtt{\$}$, $\mathtt{a\$}$, and $\mathtt{ba\$}$.
Then the leaves of $\STree(T)$ correspond to $n$ \emph{reversed prefixes} of $T$.
We denote by $T_i = T[0..i]$ the prefix of $T$ with length $i + 1$.
For each $i$, we label the leaf corresponding to $T_i^R$ by $i$, and denote this leaf by $\ell_i$.
For a node $v$ of $\STree(T)$, we use $\lastPos_T(v)$ as shorthand for $\lastPos_T(\str(v)^R)$.
Note that $\lastPos_T(\ell_i) = i$ holds for each $\ell_i$.

When a new character $T[i]$ is appended to $T$, Weiner's algorithm updates the suffix tree from $\STree(T_{i-1})$ to $\STree(T_i)$ as follows.
\begin{enumerate}
    \item Find the \emph{insertion point} of the new leaf, namely the locus corresponding to the longest prefix of $T_i^R$ that occurs in $T_{i-1}^R$.
    This locus corresponds to the reversal of the longest suffix of $T_i$ that occurs in $T_{i-1}$.
    \item If this locus is implicit, split the edge containing it and insert a new internal node at the locus.
    Let $u_i$ denote the node at the insertion point after this possible split.
    \item Create a new leaf $\ell_i$ as a child of $u_i$.
\end{enumerate}
Fig.~\ref{fig:lcst_lastPos_sample} illustrates the updating procedure of the suffix tree.
We store the leaf $\ell_i$ and the insertion point $u_i$ of this leaf for each $i$.

We use the following fact as a black box:
\begin{theorem}[\thmcite{Weiner73}]\label{thm:weiner}
  $\STree(T)$ can be represented in $O(n)$ space.
  Given a character $c$, the insertion point of the new leaf can be found, and $\STree(T)$ can be updated to $\STree(Tc)$, in amortized $O(\log \min \{\sigma, n\})$ time.
\end{theorem}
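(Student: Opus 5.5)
The statement is Weiner's theorem, which the paper uses as a black box. My plan is to recall the standard argument rather than derive anything new.

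\emph{Space and tree structure.} Since $T$ begins with the unique sentinel $\$$, every reversed prefix $T_i^R$ ends with $\$$. No reversed prefix is therefore a proper prefix of another, so $\STree(T)$ has exactly $n$ leaves. Every internal node other than possibly the root is branching, so there are at most $n-1$ internal nodes. Each edge label is stored as a pair of positions into $T$; for $T^R$ these are read backwards from the leaf's label. This gives $O(n)$ space.

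\emph{Finding the insertion point.} I follow Weiner's construction. The new leaf $\ell_i$ is attached below the locus of the longest suffix $w$ of $T_i$ that occurs in $T_{i-1}$. Write $w = x c$ with $c = T[i]$. Then $x$ is a suffix of $T_{i-1}$, so $x^R$ is a prefix of $T_{i-1}^R$ and lies on the root-to-$\ell_{i-1}$ path.

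In reversed form, the new string is $c\,T_{i-1}^R$, and its insertion point is reached from the previous leaf by following Weiner links $c$. These are the reversed-text analogue of suffix links: the link from a node with string $y$ leads to the locus of $cy$. Concretely, starting from $\ell_{i-1}$, we walk up toward the root until we find the deepest ancestor $v$ whose string, preceded by $c$, occurs. We then take its Weiner link by $c$ and, if needed, adjust down to the correct depth by one edge split.

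\emph{Amortized cost.} There are two parts to bound.
\begin{enumerate}
\item The number of upward steps is amortized $O(1)$ by a potential argument on the depth of the current leaf's parent. Each upward step decreases the potential, and each append increases it by at most a constant, since the new insertion point is at most one deeper (in node depth) than the Weiner-link image of $v$.
\item Each step does one lookup of the $c$-labelled Weiner link or child. With links stored in balanced search trees keyed by character, this costs $O(\log \min\{\sigma, n\})$, because a node has at most $\min\{\sigma, n\}$ outgoing links.
\end{enumerate}
After the walk, splitting an edge, creating $\ell_i$, and installing the new Weiner links are constant-many updates. The only delicate link update is the one created by the split node, which inherits its link from its child.

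\emph{Main obstacle.} The hard part is the amortization. I need to show that the total number of nodes climbed over all appends is $O(n)$. I also need to show that maintaining Weiner links (or implicit ones via indicator bits plus the child-lookup trick) adds only constant link modifications per append. Since the paper treats this theorem as a black box citing Weiner, I would present the argument at this level and defer the details to the original.
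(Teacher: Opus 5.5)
The paper gives no proof of this statement beyond citing Weiner, so deferring to the original is consistent with the paper. The sketch you give, however, does not by itself establish the amortized bound, and the gap is exactly where you place the main obstacle. Your reason for the potential rising by only $O(1)$ per append is that the insertion point is at most one node below the Weiner-link image. That covers only the final split below the link target. It says nothing about the jump across the link, where node depth could a priori grow without bound. The missing lemma is: if $v'$ has an explicit $c$-link to a node $w'$, i.e.\ $\str(w') = c\,\str(v')$, then $\mathrm{depth}(w') \le \mathrm{depth}(v') + 1$ in node depth. To see it, note that every non-root node on the root-to-$w'$ path has a string $c\gamma$ with $\gamma$ a prefix of $\str(v')$, and $\gamma$ is explicit. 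For proper ancestors of $w'$ this holds because $c\gamma$ branching forces $\gamma$ branching; for $w'$ itself, $\gamma = \str(v')$. So $c\gamma \mapsto \gamma$ injects these $\mathrm{depth}(w')$ nodes into the $\mathrm{depth}(v')+1$ nodes of the root-to-$v'$ path. The new leaf hangs from $w'$ or from a node created on an edge leaving $w'$, so $\mathrm{depth}(\ell_i) \le \mathrm{depth}(v') + 3$. Hence the climb from $\ell_{i-1}$ to $v'$ costs at most $\mathrm{depth}(\ell_{i-1}) - \mathrm{depth}(\ell_i) + 3$ steps, which telescopes to $O(n)$ over all appends.

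Three further points in the sketch need correcting. \emph{The link step.} The $c$-link of the deepest marked ancestor $v$ usually cannot be followed. Its target $c\,\str(v)$ is the insertion point itself, which is implicit precisely when a split is needed, so only an indicator bit is stored. Weiner's algorithm keeps climbing to the deepest ancestor $v'$ with an explicit $c$-link; this is why the lemma above concerns $v'$. It then jumps to $w'$ (or starts from the root along the edge beginning with $c$ if no such $v'$ exists). From there it descends along the edge leaving $w'$ that begins with the character following $\str(v')$ on the path to $\ell_{i-1}$, stopping at string depth $|\str(v)|+1$. The insertion point lies on this first edge, because an explicit node before it would give an ancestor of $\ell_{i-1}$ deeper than $v'$ with an explicit $c$-link. \emph{Link maintenance is not $O(1)$ per append.} Every node climbed strictly below $v$ gains $c$ in its indicator set. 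The split node must also copy its child's indicator set; it is this set, not a link, that is inherited. Only the at most two new explicit links ($v$ to the split node, $\ell_{i-1}$ to $\ell_i$) are a constant number of updates. These costs amortize because indicator entries are never removed, so the total ever created equals the final number of explicit and implicit Weiner links. \emph{Space.} Your $O(n)$ count covers only the tree, but the character-keyed dictionaries for indicators and links must also be linear. This holds because the pairs $(u,c)$ with $c\,\str(u)$ occurring correspond bijectively to transitions of the DAWG of $T$, of which there are $O(n)$. The same bound also closes the amortization of the indicator updates.
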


\subsection{Link-cut suffix trees}
This subsection is the main technical part of this paper.
Link-cut trees~\cite{SleatorT83} are dynamic data structures for maintaining forests.
In this paper, we use link-cut trees to maintain a rooted tree with root $r$.
We orient each edge from parent to child.
A link-cut tree assigns each edge one of two types, \emph{solid} or \emph{dashed}.
We call them \emph{solid edges} and \emph{dashed edges}, respectively.
In our setting, each internal node $u$ has exactly one outgoing solid edge.
Thus, each connected component induced by solid edges forms a path to a leaf.
We call such a path a \emph{solid path}.

We use a link-cut tree to maintain $\STree(T)$, the suffix tree of $T^R$.
Each node $v$ therefore corresponds to the substring $\str(v)^R$ of $T$.
We call the resulting data structure the \emph{link-cut suffix tree} $\LCST(T)$, and define its edge types by the following recursive construction.
For $T = \$$, $\LCST(T)$ consists of the root $r$ and the leaf $\ell_0$ representing $\$$, connected by a solid edge.
We define $\LCST(T_i)$ from $\LCST(T_{i-1})$ as follows.
\begin{enumerate}
    \item 
    Compute the insertion point $u_i$ of the new leaf representing $T_i^R$.
    If it is implicit locus and lies on the edge $(x, y)$,
    then split $(x, y)$ into $(x, u_i)$ and $(u_i, y)$ by inserting a new node representing $u_i$,
    where the edge type of $(x, u_i)$ is the same as that of $(x, y)$, and $(u_i, y)$ is solid. \label{enm:split_insert_point}
    \item Create a new leaf $\ell_i$ as a child of $u_i$ with a dashed edge. \label{enm:create_new_leaf}
    \item Create a single solid path from $r$ to $\ell_i$ by converting all dashed edges to solid along the tree path from $r$ to $\ell_i$ and converting solid edges incident to this path to dashed. \label{enm:expose_operation}
\end{enumerate}
Since the construction uses the same insertion point and the same edge split as Weiner's suffix-tree update, the tree structure of $\LCST(T_i)$ is exactly the same as $\STree(T_i)$.
Let $D_i$ be the set of edges converted from dashed to solid in Step~\ref{enm:expose_operation} of the updating procedure.
Fig.~\ref{fig:lcst_lastPos_sample} illustrates the updating procedure from $\LCST(T_{i-1})$ to $\LCST(T_{i})$.

\begin{figure}[t]
    \centering
    \includegraphics[width=.48\linewidth]{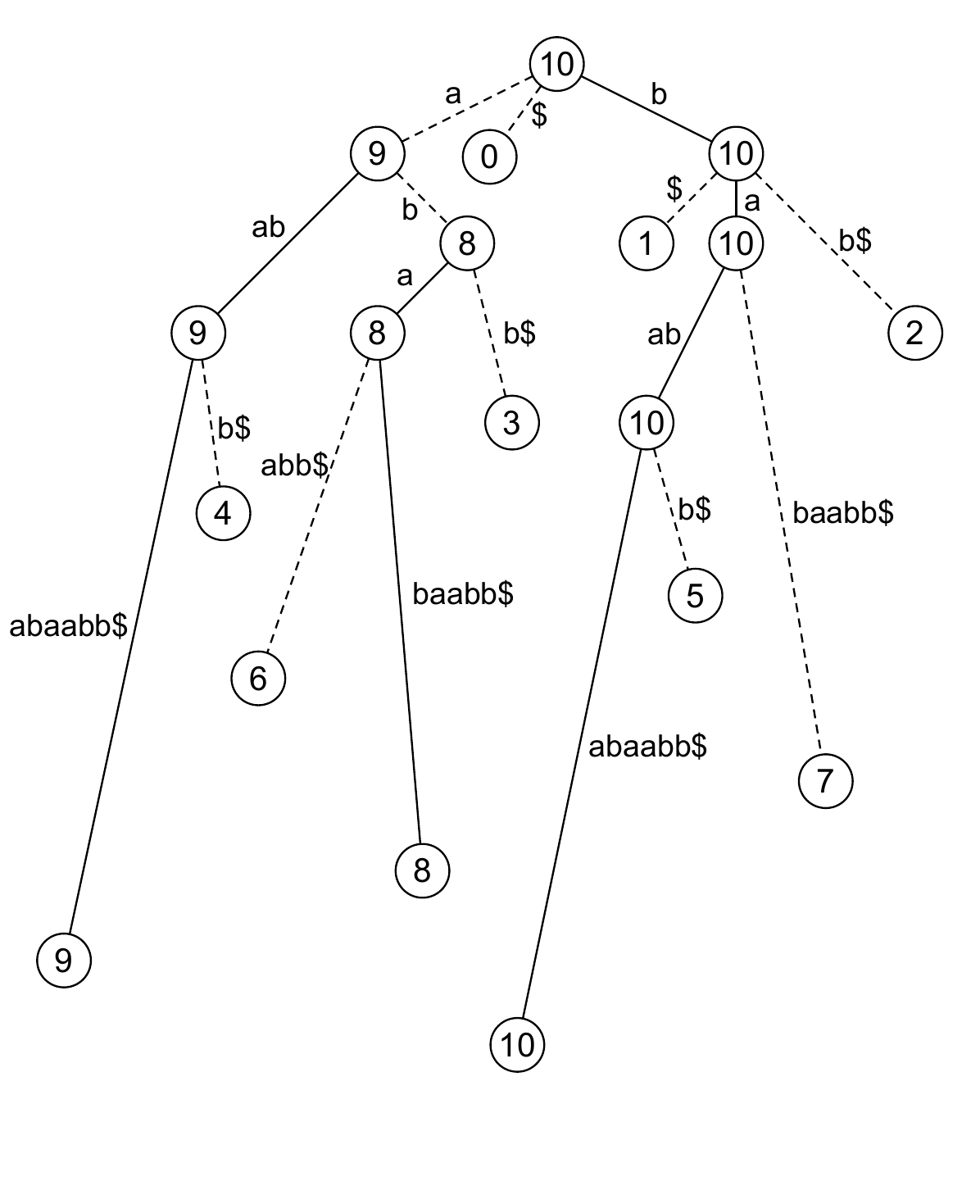}\hfill
    \includegraphics[width=.48\linewidth]{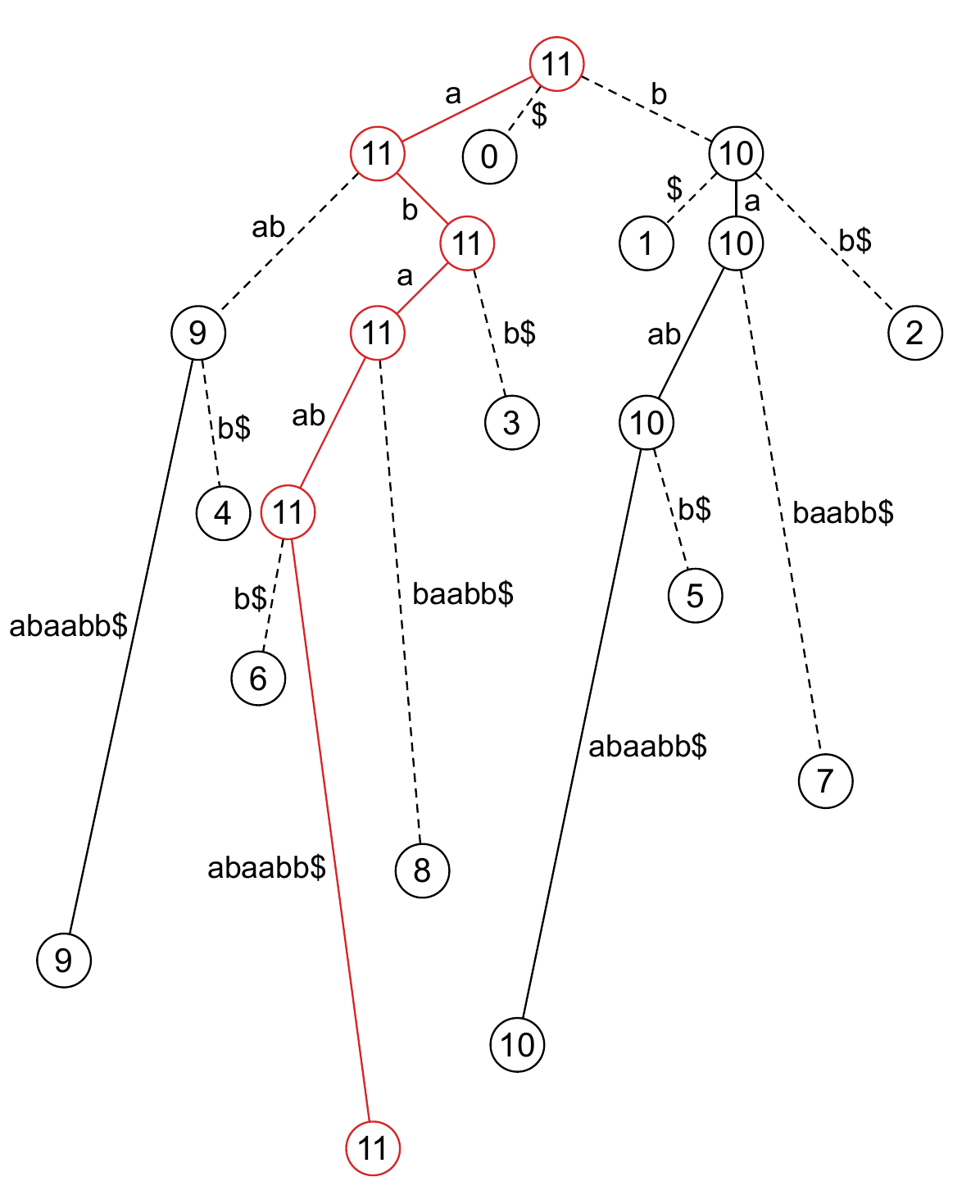}
    \caption{
    Illustrations of link-cut suffix trees for $T_{i-1} = \mathtt{\$bbaababaab}$ and $T_i = \mathtt{\$bbaababaaba}$ where $i = 11$.
    The left panel shows $\LCST(T_{i-1})$, and the right panel shows $\LCST(T_i)$ after the update.
    In this example, the new leaf $\ell_i$ represents $\str(\ell_i)^R = T[0..11] = \mathtt{\$bbaababaaba}$, and the insertion point $u_i$ represents $\str(u_i)^R = T[7..11] = \mathtt{baaba}$.
    The number drawn at each vertex $v$ is $\lastPos_{T_{i-1}}(v)$ in the left panel and $\lastPos_{T_i}(v)$ in the right panel.
    Solid and dashed lines represent solid and dashed edges, respectively.
    The root-to-$\ell_i$ path made solid by the update is drawn in red in the right panel.
    }
    \label{fig:lcst_lastPos_sample}
\end{figure}

We next show that the solid paths of $\LCST(T)$ represent the rightmost ending positions.
\begin{lemma}\label{lem:lcst_solid_path_lastpos}
The solid path ending at a leaf $\ell_i$ consists exactly of the vertices $v$ with $\lastPos_T(v) = i$.
Equivalently, for any vertex $v$, the solid path containing $v$ ends at $\ell_{\lastPos_T(v)}$.
\end{lemma}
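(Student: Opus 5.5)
We prove the statement by induction on $i$, the number of update steps, maintaining the invariant $(\ast)$: in $\LCST(T_i)$, for every vertex $v$, the solid path containing $v$ ends at the leaf $\ell_{\lastPos_{T_i}(v)}$. Two structural facts are used throughout. First, each internal node has exactly one solid child edge, so solid paths partition the vertices and each ends at a leaf. Second, for a node $v$ of $\STree(T_i)$, the occurrences of $\str(v)^R$ in $T_i$ correspond exactly to the leaves $\ell_j$ in the subtree of $v$ (the leaf $\ell_j$ lies below $v$ iff $\str(v)$ is a prefix of $T_j^R$, i.e.\ iff $\str(v)^R$ ends at position $j$). Hence
\[
\lastPos_{T_i}(v)=\max\{j : \ell_j \text{ is in the subtree of } v\},
\]
the largest leaf label below $v$. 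Thus $(\ast)$ is equivalent to: the solid child edge of every internal node $v$ leads toward the child whose subtree contains the maximum leaf label below $v$.

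For the base case, $T=\$$: the root $r$ and $\ell_0$ are joined by a solid edge, and both have $\lastPos=0$ (for the root, $\lastPos(\varepsilon)=|T|-1=0$). This matches the definition of $\lastPos$ for the root, which has every leaf below it.

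For the inductive step we pass from $\LCST(T_{i-1})$ to $\LCST(T_i)$ and check the three construction steps.
\begin{itemize}
\item Split (Step~\ref{enm:split_insert_point}). The new node $u_i$ has the same leaf set below it as $y$ had, so before the leaf is added $\lastPos(u_i)=\lastPos(y)$. Making $(u_i,y)$ solid places $u_i$ on $y$'s solid path. Giving $(x,u_i)$ the old type of $(x,y)$ preserves the solid-child choice at $x$. So the invariant holds for $T_{i-1}$ on the split tree.
\item Adding the leaf $\ell_i$ (Steps~\ref{enm:create_new_leaf} and~\ref{enm:expose_operation}). The label $i$ is the new global maximum. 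It lies below exactly the ancestors of $\ell_i$, i.e.\ the vertices on the path from $r$ to $\ell_i$. For each such ancestor $v$, the max leaf below $v$ becomes $i$, so its solid child must be the child on the path toward $\ell_i$; Step~\ref{enm:expose_operation} makes precisely these edges solid. It also makes the previously solid sibling edges dashed, so each internal node keeps exactly one solid child edge.
\item Unaffected vertices. For a vertex $w$ not on the path, the leaf set below $w$ is unchanged, so $\lastPos$ is unchanged. The solid child edge of $w$ is also unchanged, since only edges out of path vertices change type. The one thing to verify is that $w$'s solid path still ends at the same leaf. The solid path from $w$ downward stays inside $w$'s subtree, which is disjoint from the path to $\ell_i$. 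So the downward part is unaffected, and by induction it ends at $\ell_{\lastPos(w)}$.
\end{itemize}

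The main subtle point is the direction of solid paths. With one solid child per internal node, a solid path is determined by going downward, but a vertex may also be reached from above along a solid edge. The lemma is phrased in terms of "the solid path containing $v$", so we need that solid paths in this sense are exactly the downward chains: a vertex $w$ that becomes the top of a path when its parent edge turns dashed still defines a path ending at the same leaf. Following only downward solid edges from any vertex gives the leaf at the end of its solid path, and that leaf is determined by the max-label characterization above. This completes the induction.
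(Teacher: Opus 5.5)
Your proof is correct and takes essentially the same approach as the paper. Both arguments characterize $\lastPos_T(v)$ as the largest leaf label below $v$ and induct over the construction, showing that exactly the root-to-$\ell_i$ vertices take the value $i$ and are merged into one solid path, while every off-path vertex keeps both its $\lastPos$ value and the bottom leaf of its solid path. Your explicit check that the edge split preserves the invariant, and your local reformulation (the solid child is the child containing the maximum leaf label), are details the paper's proof leaves implicit, so they add rigor without changing the route.
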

\begin{proof}
We first observe that, for every vertex $v$ of $\LCST(T)$,
\[
    \lastPos_T(v) = \max\{i\mid \ell_i \text{ is a descendant of } v\}.
\]
Indeed, $\ell_i$ is a descendant of $v$ if and only if $\str(v)$ is a prefix of $\str(\ell_i) = T_i^R$.
This is equivalent to saying that $\str(v)^R$ is a suffix of $T_i$, namely that an occurrence of $\str(v)^R$ ends at position $i$ in $T$.
Hence the maximum such $i$ is exactly $\lastPos_T(v)$.

We prove the claim by induction on the construction of $\LCST(T)$.
For $T = \$$, the tree consists of the root and the leaf $\ell_0$ connected by one solid edge, and both vertices have $\lastPos_T$ value $0$.
Thus the claim holds.

Suppose that the claim holds for $\LCST(T_{i-1})$.
When the new leaf $\ell_i$ is inserted, the vertices whose descendant leaves newly include $\ell_i$ are exactly the vertices on the path from the root to $\ell_i$.
By the characterization above, these are exactly the vertices whose $\lastPos_T$ value becomes $i$, while all other vertices keep the same $\lastPos_T$ value.

The update procedure makes the path from the root $r$ to the new leaf $\ell_i$ a single solid path.
A leaf outside this path whose incoming edge is dashed forms a singleton solid path ending at itself.
For any vertex outside this path, the leaf at which the solid path containing the vertex ends does not change.
Therefore, the vertices for which the solid path ends at a different leaf are exactly the vertices on the path from $r$ to $\ell_i$, and the new leaf is $\ell_i$.
These are exactly the vertices whose $\lastPos_T$ value changes, and the induction hypothesis proves the claim for the remaining vertices.
\end{proof}

By Lemma~\ref{lem:lcst_solid_path_lastpos}, $\LCST(T)$ can be viewed as a data structure that efficiently maintains the ending positions of the rightmost occurrences of substrings of $T$.
The lemma also implies the following characterization of solid and dashed edges.
\begin{corollary}\label{cor:lcst_edge_lastpos}
For any edge $(u, v)$ of $\LCST(T)$, the edge $(u, v)$ is solid if and only if $\lastPos_T(u) = \lastPos_T(v)$.
Equivalently, $(u, v)$ is dashed if and only if $\lastPos_T(u) \neq \lastPos_T(v)$.
\end{corollary}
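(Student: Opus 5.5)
The plan is to read the corollary off Lemma~\ref{lem:lcst_solid_path_lastpos} together with one structural fact about link-cut suffix trees. That fact is that the solid edges partition the vertex set into vertex-disjoint paths, each running downward and ending at a leaf. I will first make this explicit. Every edge is oriented from parent to child, and each internal node has exactly one outgoing solid edge. Each vertex also has at most one incoming edge, namely from its parent. Hence the solid edges form a disjoint union of downward paths. Following outgoing solid edges from any vertex never stops at an internal node, so each such path terminates at a leaf. In particular, every vertex $v$ lies on exactly one solid path, and its end leaf is well defined.

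For the forward direction, I assume $(u,v)$ is solid. Then $u$ and $v$ lie on the same solid path, which ends at some leaf $\ell_i$. By Lemma~\ref{lem:lcst_solid_path_lastpos}, both vertices satisfy $\lastPos_T(u) = i = \lastPos_T(v)$.

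For the converse, I assume $(u,v)$ is dashed. Since $u$ is the parent of $v$, $u$ is internal, so $u$ has a unique outgoing solid edge $(u,w)$ with $w \neq v$. The solid path through $u$ continues into $w$'s subtree and ends at some leaf $\ell_i$ that is a descendant of $w$. The solid path through $v$ consists of $v$ and vertices below $v$, because the edge into $v$ is dashed. It therefore ends at a leaf $\ell_j$ in $v$'s subtree. The subtrees of the siblings $v$ and $w$ are disjoint, so $\ell_i \neq \ell_j$ and hence $i \neq j$. Applying Lemma~\ref{lem:lcst_solid_path_lastpos} to both vertices gives $\lastPos_T(u) = i \neq j = \lastPos_T(v)$.

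The equivalent dashed formulation is then simply the contrapositive, since every edge is either solid or dashed. The only step needing care is the structural fact in the first paragraph. It relies on the invariant that every internal node has exactly one solid child, and this invariant must be preserved by the update steps. Step~\ref{enm:split_insert_point} of the construction gives the new node $u_i$ the solid child edge $(u_i,y)$. In Step~\ref{enm:expose_operation}, converting the path to solid is paired with converting the competing solid edges to dashed, so each node on the path keeps exactly one solid child. I would mention this briefly rather than redo the induction, since the paper already asserts the invariant.
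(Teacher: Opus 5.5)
Your proof is correct and takes the same route as the paper, which states the corollary as an immediate consequence of Lemma~\ref{lem:lcst_solid_path_lastpos} without further argument. You make explicit what the paper leaves implicit: the solid edges split the tree into vertex-disjoint downward paths, each ending at a different leaf, so the two endpoints of a dashed edge lie on solid paths ending at distinct leaves $\ell_i \neq \ell_j$.
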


Next, we give a characterization of upper endpoints of edges in $D_i$ using the $\lastPos_{T_{i-1}}$ value.
For each $i$, let $\mathcal{B}_i = \{\str(u)^R \mid (u, v) \in D_i\}$.
This is the set of reversals of the strings represented by upper endpoints of edges in $D_i$.
\begin{lemma}\label{lem:lcst_Di_lastpos_change}
The set $\mathcal{B}_i$ is exactly the set of possibly empty suffixes $T_i[k..i]$ with $1 \le k \le i+1$ that occur in $T_{i-1}$ and satisfy
\[
    \lastPos_{T_{i-1}}(T_i[k..i]) \ne
    \lastPos_{T_{i-1}}(T_i[k-1..i]).
\]
\end{lemma}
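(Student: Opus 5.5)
We characterize the edges of $D_i$ through Corollary~\ref{cor:lcst_edge_lastpos}, applied to $\LCST(T_{i-1})$. First we handle the tree change. Passing from $\STree(T_{i-1})$ to $\STree(T_i)$ can split an edge $(x,y)$ at $u_i$. In that case the new edge $(u_i,y)$ is solid, and $(x,u_i)$ inherits the type of $(x,y)$. Since $\lastPos_{T_{i-1}}(u_i)=\lastPos_{T_{i-1}}(y)$ (they have the same descendant leaves before $\ell_i$ is attached), the corollary-style statement ``an edge is solid iff its endpoints have equal $\lastPos_{T_{i-1}}$'' still holds in the tree obtained after Step~\ref{enm:split_insert_point}.

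The new edge $(u_i,\ell_i)$ is dashed by construction, so we have to treat it separately. We set $\lastPos_{T_{i-1}}(\ell_i)=\bot\ne\lastPos_{T_{i-1}}(u_i)$. With this convention the new edge also satisfies ``dashed iff $\lastPos_{T_{i-1}}$ differs.''

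Next we identify $D_i$. Step~\ref{enm:expose_operation} converts exactly the dashed edges on the root-to-$\ell_i$ path. Hence $D_i$ is the set of edges $(u,v)$ on this path with $\lastPos_{T_{i-1}}(u)\ne\lastPos_{T_{i-1}}(v)$.

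We translate vertices on the path into suffixes. Each vertex on the path has the form $\str(u)^R=T_i[k..i]$ for some $k$; the root gives $k=i+1$ (the empty string) and $\ell_i$ gives $k=0$. Every proper ancestor of $\ell_i$ has $\str(u)^R$ occurring in $T_{i-1}$, since its label is a prefix of $\str(u_i)$. Conversely, every suffix occurring in $T_{i-1}$ has its locus on the path at or above $u_i$.

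The main obstacle is that the path's vertices are only the \emph{explicit} nodes. We therefore compare a node $u$ with $\str(u)^R=T_i[k..i]$ and its child $v$ on the path, and show
\[
\lastPos_{T_{i-1}}(v)=\lastPos_{T_{i-1}}(T_i[k-1..i]).
\]
This holds because $T_i[k-1..i]^R$ is $\str(u)$ extended by one character. Its locus lies on the edge $(u,v)$, possibly at $v$ itself, and every string whose locus is on that edge (including implicit ones) has the same set of descendant leaves as $v$. When $v=\ell_i$ and $T_i[k-1..i]$ does not occur in $T_{i-1}$, both sides equal $\bot$, matching the convention above.

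Hence $(u,v)\in D_i$ iff $\lastPos_{T_{i-1}}(T_i[k..i])\ne\lastPos_{T_{i-1}}(T_i[k-1..i])$ for $\str(u)^R=T_i[k..i]$, with $k\ge1$ because $u\ne\ell_i$.

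It remains to show that a suffix $T_i[k..i]$ occurring in $T_{i-1}$ but with an implicit locus never satisfies the inequality. This follows because its one-character extension lies on the same edge, so both have the $\lastPos$ of the lower endpoint. This gives both inclusions.
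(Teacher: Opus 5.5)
Your proof is correct and takes essentially the same route as the paper. Both arguments rest on three facts: the characterization that an edge is dashed iff its endpoints have different $\lastPos$; the observation that the locus of the one-character extension lies on the edge $(u,v)$ and so has $\lastPos_{T_{i-1}}(v)$; and the fact that an implicit locus shares its $\lastPos$ with its extension. Your version merges the $v=\ell_i$ case into the general argument via $\lastPos_{T_{i-1}}(\ell_i)=\bot$, and it explicitly checks that the characterization survives the edge split in Step~\ref{enm:split_insert_point}, a point the paper uses only implicitly when it applies Corollary~\ref{cor:lcst_edge_lastpos} to the tree immediately before the last step.
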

\begin{proof}
Let $\mathcal{C}_i$ denote the set of suffixes $T_i[k..i]$ satisfying the condition in the lemma.
We prove $\mathcal{B}_i = \mathcal{C}_i$ by showing both inclusions.
The empty suffix corresponds to $k = i + 1$ and occurs in $T_{i-1}$.
When $b = \varepsilon$, the notation $b^+$ used below denotes $T_i[i..i]$, and the locus of $b^R$ is the root.
By our convention, $\lastPos_{T_{i-1}}(\varepsilon) = i - 1$, so its condition in the lemma is $i - 1 \ne \lastPos_{T_{i-1}}(T_i[i..i])$.
Thus both inclusion arguments below cover the empty suffix.
During the last step of the update, the path from the root to the new leaf $\ell_i$ is made solid.
Thus $D_i$ is the set of dashed edges on this path immediately before the last step.

First, we prove $\mathcal{B}_i \subseteq \mathcal{C}_i$.
Let $(u, v)$ be an edge in $D_i$.
Since $(u, v)$ lies on the root-to-$\ell_i$ path, the string $b = \str(u)^R$ is a suffix of $T_i$.
Let $b^+$ be the suffix of $T_i$ that is longer than $b$ by one character.
If $v = \ell_i$, then $u$ is the insertion point of the new leaf.
In this case, $b$ is the longest suffix of $T_i$ that occurs in $T_{i-1}$, and the longer suffix $b^+$ does not occur in $T_{i-1}$.
Since $b$ occurs in $T_{i-1}$, we have $\lastPos_{T_{i-1}}(b) \ne \lastPos_{T_{i-1}}(b^+)$.
Thus $b = \str(u)^R \in \mathcal{C}_i$.

It remains to consider the case $v \ne \ell_i$.
The locus of $(b^+)^R$ lies on the edge $(u, v)$, possibly at $v$.
Since there is no branching point inside an edge, the leaves of $\STree(T_{i-1})$ below this locus are exactly the descendant leaves of $v$.
Thus $\lastPos_{T_{i-1}}(b^+) = \lastPos_{T_{i-1}}(v)$, while $\lastPos_{T_{i-1}}(b) = \lastPos_{T_{i-1}}(u)$.
Since the edge $(u, v)$ is dashed immediately before the last step, Corollary~\ref{cor:lcst_edge_lastpos} implies $\lastPos_{T_{i-1}}(u) \ne \lastPos_{T_{i-1}}(v)$.
Combining these equalities gives $\lastPos_{T_{i-1}}(b) \ne \lastPos_{T_{i-1}}(b^+)$, and hence $b = \str(u)^R \in \mathcal{C}_i$.
This proves $\mathcal{B}_i \subseteq \mathcal{C}_i$.

Conversely, let $b$ be a suffix in $\mathcal{C}_i$, and let $b^+$ be the suffix of $T_i$ that is longer than $b$ by one character.
If $b^+$ does not occur in $T_{i-1}$, then $b$ is the longest suffix of $T_i$ that occurs in $T_{i-1}$.
Thus the locus of $b^R$ is the insertion point, and the edge from this locus to $\ell_i$ belongs to $D_i$.
Therefore $b \in \mathcal{B}_i$.

Assume now that $b^+$ occurs in $T_{i-1}$.
Suppose for contradiction that the locus of $b^R$ is implicit, and let $e$ be the edge containing it.
Then the locus of $(b^+)^R$ lies on $e$.
Both loci have the same descendant leaves in $\STree(T_{i-1})$ as the lower endpoint of $e$.
This implies $\lastPos_{T_{i-1}}(b) = \lastPos_{T_{i-1}}(b^+)$, contradicting $b \in \mathcal{C}_i$.
Hence the locus of $b^R$ is explicit.

We now identify the edge in $D_i$ whose upper endpoint represents $b$.
Let $u$ be this node.
Let $(u, v)$ be the edge followed by the path spelling $(b^+)^R$.
As above, $\lastPos_{T_{i-1}}(b^+) = \lastPos_{T_{i-1}}(v)$ and $\lastPos_{T_{i-1}}(b) = \lastPos_{T_{i-1}}(u)$.
Since these values are different, the edge $(u, v)$ is dashed by Corollary~\ref{cor:lcst_edge_lastpos}.
This edge lies on the path from the root to $\ell_i$, and hence belongs to $D_i$.
Therefore $b = \str(u)^R$ for some edge $(u, v) \in D_i$, and so $b \in \mathcal{B}_i$.
This proves $\mathcal{C}_i \subseteq \mathcal{B}_i$.
Combining this with $\mathcal{B}_i \subseteq \mathcal{C}_i$ gives $\mathcal{B}_i = \mathcal{C}_i$.
\end{proof}
Consider the suffixes of $T_i$ from $\varepsilon$ to $T_i$ in increasing order of length.
By Lemma~\ref{lem:lcst_Di_lastpos_change}, the upper endpoints $u$ of edges in $D_i$
correspond exactly to the suffixes $\str(u)^R$ whose $\lastPos_{T_{i-1}}$ differs from that
of the next longer suffix.
For any suffix $s = T_i[j..i]$, consider the set of upper endpoints whose string depth is at least $|s|$.
By Lemma~\ref{lem:lcst_Di_lastpos_change}, this set corresponds to the suffixes of length at least
$|s|$ whose $\lastPos_{T_{i-1}}$ differs from that of the next longer suffix.
Thus, $\lastPos_{T_{i-1}}(s) = \lastPos_{T_{i-1}}(u)$, where $u$ is an endpoint of minimum string depth in the set.
If no such upper endpoint exists, then $\lastPos_{T_{i-1}}(s) = \bot$.
Thus, we can compute $\lastPos_{T_{i-1}}(s)$ by scanning the edges in $D_i$ and finding an upper endpoint of minimum string depth among those of depth at least $|s|$.
\begin{corollary}\label{cor:lcst_Di_second_rightmost}
Suppose that we have the set of tuples
\[
    \{(u, |\str(u)|, j_u) \mid (u, v) \in D_i,
    j_u = \lastPos_{T_{i-1}}(u)\}.
\]
Then, given an integer $j$, we can compute $\lastPos_{T_{i-1}}(T_i[j..i])$ in $O(|D_i|)$ time.
\end{corollary}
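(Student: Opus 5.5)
The plan is to formalize the scan described just before the statement, so correctness rests on Lemma~\ref{lem:lcst_Di_lastpos_change} and the algorithm is one linear pass over the tuples. Let $s = T_i[j..i]$ with $|s| = i-j+1$; we may allow $j = i+1$, i.e., $s = \varepsilon$. The algorithm scans all tuples $(u, |\str(u)|, j_u)$ and keeps the one with minimum $|\str(u)|$ subject to $|\str(u)| \ge |s|$. If such a tuple exists it returns its $j_u$, and otherwise it returns $\bot$. Each tuple is examined once in $O(1)$ time, so the running time is $O(|D_i|)$.

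For correctness, index the suffixes of $T_i$ by length, $s_L = T_i[i-L+1..i]$ for $0 \le L \le i+1$, and let $L^\ast$ be the largest $L$ such that $s_L$ occurs in $T_{i-1}$. Lemma~\ref{lem:lcst_Di_lastpos_change} says that the depths $|\str(u)|$ over $(u,v)\in D_i$ are exactly the lengths $L \le L^\ast$ with $\lastPos_{T_{i-1}}(s_L) \ne \lastPos_{T_{i-1}}(s_{L+1})$. This includes $L = L^\ast$, because $s_{L^\ast+1}$ does not occur and so its value is $\bot$.

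There are two cases.

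\emph{Case $|s| > L^\ast$.} Then $s$ does not occur in $T_{i-1}$, so $\lastPos_{T_{i-1}}(s) = \bot$. Every recorded depth is at most $L^\ast < |s|$, so no tuple qualifies and the algorithm correctly returns $\bot$.

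\emph{Case $|s| \le L^\ast$.} The depth $L^\ast$ is recorded and is at least $|s|$, so a qualifying tuple exists. Let $L'$ be the minimum qualifying depth. No length $L$ with $|s| \le L < L'$ is a change point, so $\lastPos_{T_{i-1}}(s_L) = \lastPos_{T_{i-1}}(s_{L+1})$ for each such $L$. Chaining these equalities gives $\lastPos_{T_{i-1}}(s) = \lastPos_{T_{i-1}}(s_{L'})$. Since the locus of $s_{L'}^R$ is the node $u$ recorded at depth $L'$, this value equals $\lastPos_{T_{i-1}}(u) = j_u$.

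The main obstacle is the boundary behaviour, which I would handle by using the paper's conventions explicitly:
\begin{itemize}
  \item For the empty suffix, $\lastPos_{T_{i-1}}(\varepsilon) = i-1$, and the lemma explicitly covers $k = i+1$.
  \item At the top of the chain we treat $\lastPos_{T_{i-1}}(s_{L+1}) = \bot$ when $s_{L+1}$ does not occur. This is exactly how Lemma~\ref{lem:lcst_Di_lastpos_change} detects the insertion point.
  \item We must check that the node $u$ with $|\str(u)| = L'$ indeed has $\lastPos_{T_{i-1}}(u) = \lastPos_{T_{i-1}}(s_{L'})$. This holds because $\str(u)^R = s_{L'}$ by the identification $\mathcal{B}_i = \{\str(u)^R\}$ in the lemma.
\end{itemize}
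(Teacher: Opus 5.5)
Your proposal is correct and follows the paper's own argument: scan the tuples from $D_i$, take the upper endpoint of minimum string depth among those of depth at least $|s|$, and return its $j_u$, or $\bot$ if none exists. Your case split on $L^\ast$ and the explicit chaining of equal $\lastPos_{T_{i-1}}$ values over non-change lengths spell out the $\bot$ case and the equality that the paper's sketch leaves implicit.
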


The lemmas and corollaries above characterize the $\lastPos$ values represented by the solid paths and dashed edges of $\LCST(T)$.
It remains to show that $\LCST(T)$ can be constructed online efficiently, while also reporting the sets $D_i$ and the values $\lastPos_{T_{i-1}}(u)$ for their upper endpoints.
By combining the standard analysis of link-cut trees with Weiner's online suffix-tree construction, we obtain the following result.
\begin{restatable}{theorem}{lcstcomplexity}\label{thm:lcst_complexity}
For a string $T$ of length $n$, $\LCST(T)$ can be constructed online in total $O(n \log n)$ time and $O(n)$ space.
We can also maintain each current dashed edge $(u, v)$ with its associated value $j$ such that the solid path containing $v$ ends at $\ell_j$.
Furthermore, over the construction, all sets $D_i$ and, for every edge $(u, v) \in D_i$, the value $\lastPos_{T_{i-1}}(u)$ can be reported online within the same total time bound.
\end{restatable}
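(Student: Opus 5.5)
We will realize $\LCST(T)$ by storing the tree of $\STree(T)$ from Theorem~\ref{thm:weiner} and overlaying a Sleator--Tarjan link-cut tree on the same node set, with the link-cut tree's preferred-path decomposition serving as the solid/dashed labelling. Each solid path is stored as a splay tree keyed by depth. Each dashed edge $(u,v)$ is represented, as usual, by a path-parent pointer from the top of $v$'s solid path to $u$. Step~\ref{enm:expose_operation} of the update is then literally the $\mathrm{access}(\ell_i)$ (expose) operation. Step~\ref{enm:create_new_leaf} is a $\mathrm{link}$ of a singleton. Step~\ref{enm:split_insert_point} is a node subdivision: we locate $y$ in its splay tree and insert $u_i$ immediately above it, which makes $(u_i,y)$ solid and lets $(x,u_i)$ inherit the type of $(x,y)$. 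If $(x,y)$ was dashed, $u_i$ becomes the new top of $y$'s solid path and takes over $y$'s path-parent pointer. Each of these is a constant number of splay operations, costing $O(\log n)$ amortized. The suffix-tree part itself costs $O(n\log\min\{\sigma,n\})$ total by Theorem~\ref{thm:weiner}, and the space is $O(n)$ since each node carries $O(1)$ extra fields.

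Next we would bound the number of preferred-child changes. For this we would invoke the standard heavy--light argument for link-cut trees. The number of dashed-to-solid conversions during an access is at most the number of light edges on the access path, which is $O(\log n)$, plus the number of heavy edges that were dashed. Each of the latter is charged to a previous solid-to-dashed conversion of a heavy edge, and such conversions occur only during accesses, at most $O(\log n)$ per access because they happen alongside light edges becoming solid, or during splits and links.

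The main obstacle we expect is making the heavy/light potential robust to the suffix-tree-specific operations. These are subdivision of an edge by $u_i$ and attachment of $\ell_i$, and we must show they change the heavy/light status of only $O(\log n)$ edges. Subdividing an edge does not change subtree sizes of existing nodes, and the new edge $(u_i,y)$ has the same child size as the old edge. Adding a leaf increases sizes only along the root-to-$\ell_i$ path, which is exactly the path accessed immediately afterwards, so any new heavy edges lie on it and can be paid for within that access. With this, $\sum_i |D_i| = O(n\log n)$, and the total access cost, $O(\log n)$ amortized splay time per preferred-child change, is $O(n\log n)$.

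Finally, the reporting and bookkeeping would work as follows. For each solid path we store, at its splay root, the index $j$ of the leaf $\ell_j$ at its bottom, kept in the splay tree as a subtree-maximum depth attribute or a bottom pointer. The value attached to a dashed edge $(u,v)$ is the value stored with $v$'s solid path, retrievable in $O(\log n)$ amortized time after splaying $v$. During $\mathrm{access}(\ell_i)$, each time a path-parent pointer from the top of some solid path to a node $u$ is converted to solid, we output the edge as an element of $D_i$. Before merging, we read $u$'s current path end, which is $\lastPos_{T_{i-1}}(u)$ by Lemma~\ref{lem:lcst_solid_path_lastpos}. We also read the string depth $|\str(u)|$, which is stored at each node. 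The solid edge being detached at $u$ becomes dashed and keeps the end index of its path, which is unchanged. After the access the merged path ends at $\ell_i$, so we update its stored end to $i$. Every step here is $O(1)$ extra work per conversion, giving the claimed total $O(n\log n)$ bound.
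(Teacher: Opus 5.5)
Your construction matches the paper's: a Sleator--Tarjan link-cut tree overlaid on $\STree(T)$, Step~3 realized as $\mathrm{access}(\ell_i)$, the bottom-leaf label kept at each splay root, and $\lastPos_{T_{i-1}}(u)$ read at $u$'s splay root just before each splice. Your heavy--light argument for $\sum_i |D_i| = O(n\log n)$ is correct, including the check that subdivision and leaf insertion change heavy/light status only on the path about to be accessed. Measure subtree size in leaves, so that a subdivision really leaves all existing sizes unchanged. This part is more explicit than the paper, which only cites the standard analysis.

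The gap is the final time accounting. You pay ``$O(\log n)$ amortized splay time per preferred-child change'' and conclude $O(n\log n)$. Multiplying that by your own bound $\sum_i|D_i| = O(n\log n)$ gives only $O(n\log^2 n)$. Counting changes by heavy--light decomposition and then paying an independent $O(\log n)$ splay for each one is exactly the naive link-cut analysis, and it loses a logarithmic factor.

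The missing ingredient is Sleator and Tarjan's single potential $\Phi=\sum_v \log s(v)$ over the virtual tree. Here the virtual tree is the splay trees joined by path-parent pointers, and $s(v)$ is the number of virtual descendants of $v$. Under $\Phi$, the $k$ splays of one access telescope to $O(\log n + k)$ amortized, i.e., $O(1)$ per preferred-child change plus $O(\log n)$ per access. Combined with your bound on $\sum_i|D_i|$, this gives $O(n\log n)$. You must also check that Steps~1 and~2 raise $\Phi$ by only $O(\log n)$. This holds because inserting one node adds $1$ to $s$ at each virtual ancestor, and those sizes strictly increase upward. So $\Phi$ grows by at most $\sum_{k=1}^{n}\log(1+1/k)=\log(n+1)$; alternatively, access $u_i$ before modifying the tree.

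A lesser point: the statement asks you to \emph{maintain} $j$ for each dashed edge. Recovering it by splaying $v$ is weaker, since the MRM query needs $O(1)$ reads without restructuring. Instead, store $j$ at $v$ at the moment $(u,v)$ becomes dashed. This stays valid because $\lastPos_T(v)$ changes only when a leaf is inserted below $v$, and the ensuing access turns $(u,v)$ solid again.
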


\begin{proof}
By Theorem~\ref{thm:weiner}, the insertion point of each new leaf in the suffix tree can be found online, and the underlying suffix tree can be updated in total $O(n \log \min\{\sigma, n\}) \subseteq O(n \log n)$ time and $O(n)$ space.
Thus it remains to maintain the edge types and to report the required edges and $\lastPos$ values.

We use the standard link-cut tree implementation of Sleator and Tarjan~\cite{SleatorT83}, with auxiliary trees implemented as splay trees~\cite{SleatorT85}.
Each solid path is represented by one auxiliary splay tree.
We store the label of the leaf ending the corresponding solid path only at the root of this auxiliary splay tree.
By Lemma~\ref{lem:lcst_solid_path_lastpos}, this label equals the $\lastPos$ value of every vertex on the solid path.
When a splay operation changes the root, this label is moved to the new root, which adds only constant work per rotation.

The changes to the underlying tree in Step~\ref{enm:split_insert_point} and Step~\ref{enm:create_new_leaf} are implemented using only standard link-cut tree primitives.
For Step~\ref{enm:split_insert_point}, if the insertion point is an implicit locus on an edge $(x, y)$, then this edge is divided by a constant number of split operations, concatenations, and local updates.
If $(x, y)$ was solid, then $x$, $u_i$, and $y$ belong to the same solid path, and the auxiliary splay tree for this path is updated by inserting $u_i$ between $x$ and $y$.
If $(x, y)$ was dashed, then $(x, u_i)$ remains dashed, and $u_i$ is inserted at the beginning of the solid path that starts with $y$.
For Step~\ref{enm:create_new_leaf}, the new leaf $\ell_i$ is created as a singleton auxiliary tree with label $i$ and is connected to $u_i$ by a dashed edge.
This also takes only a constant number of link-cut tree operations.
Thus Step~\ref{enm:split_insert_point} and Step~\ref{enm:create_new_leaf} are implemented within the standard link-cut tree framework with $O(1)$ operations per update.

It remains to handle Step~\ref{enm:expose_operation}.
This step is exactly the standard \emph{expose} operation applied to $\ell_i$.
During this operation, the dashed edges on the root-to-$\ell_i$ path are exactly the edges in $D_i$, and each of them is processed by one splice operation.
Consider such an edge $(u, v)$ immediately before it is changed from dashed to solid.
The splice operation performs the tree concatenation after splaying the endpoints $u$ and $v$ to the roots of their respective auxiliary splay trees.
Thus the label stored at the root containing $u$ can be read with only constant-time overhead, and, by Lemma~\ref{lem:lcst_solid_path_lastpos}, we output $(u, v)$ together with the value $\lastPos_{T_{i-1}}(u)$.

We also maintain the current set of dashed edges and the leaf label associated with the lower endpoint of each dashed edge.
When a new dashed edge $(u, v)$ is inserted, the label of the leaf $\ell_j$ ending the solid path containing $v$ is stored at the root of the auxiliary splay tree containing $v$, and hence the value $j$ can be read with only constant-time overhead.

The augmentation incurs only constant-time overhead for each splay tree operation.
By the standard amortized analysis of link-cut trees, any sequence of $N$ link-cut tree operations used here can be performed in total $O(N \log N)$ time.
Since the construction performs $O(n)$ such operations, the link-cut tree part takes total $O(n \log n)$ time.
Combining this with the suffix-tree update time gives total $O(n \log n)$ time.

The suffix tree has $O(n)$ vertices and edges, and the link-cut tree stores only constant-size additional information per vertex, edge, and auxiliary splay tree.
The algorithm reports the sets $D_i$ and the associated $\lastPos$ values online without storing them in the data structure.
Thus, the total working space is $O(n)$.
\end{proof}
 \section{Computing MCSs online}\label{sec:mcsonline}
In this section, we introduce an algorithm that maintains the set of MCSs of the input text given in an online manner.

We first show that the set $\mathcal{B}_i$ introduced in the previous section is exactly the set of longest borders of MCSufs of $T_i$.

\begin{lemma}\label{lem:mcsuf_lastpos_characterization}
The set $\mathcal{B}_i$ is exactly the set of longest borders of MCSufs of $T_i$.
More precisely, for every $b \in \mathcal{B}_i$, let $j = \lastPos_{T_{i-1}}(b)$ and define
\[
    s_b = T_i[j - |b| + 1..i].
\]
Then $s_b$ is an MCSuf of $T_i$ whose longest border is $b$.
Conversely, every MCSuf of $T_i$ is obtained in this way from its longest border $b \in \mathcal{B}_i$.
\end{lemma}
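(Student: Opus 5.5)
Via Lemma~\ref{lem:lcst_Di_lastpos_change}, $\mathcal{B}_i$ is exactly the set of suffixes $b = T_i[k..i]$ with $1 \le k \le i+1$ that occur in $T_{i-1}$ and satisfy $\lastPos_{T_{i-1}}(b) \ne \lastPos_{T_{i-1}}(b^+)$, where $b^+ = T_i[k-1..i]$. So I will not reason about the tree at all. I will prove two things. First, for such $b$ the string $s_b$ is an MCSuf with longest border $b$. Second, every MCSuf arises from some $b$ satisfying this condition. The basic tool is the remark in the preliminaries: two occurrences of $b$ are consecutive if and only if $b$ is the longest border of the substring spanning them and that substring is closed. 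Combined with Lemma~\ref{lem:closed_maximality}, this gives both maximality conditions.

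\textbf{Forward direction.} Let $b \in \mathcal{B}_i$ and $j = \lastPos_{T_{i-1}}(b)$. By the definition of $\lastPos$, the occurrence of $b$ ending at $j$ is the last one in $T_{i-1}$. It and the suffix occurrence ending at $i$ are therefore consecutive occurrences in $T_i$, provided they do not overlap, that is, $j - |b| + 1 < i - |b| + 1$; this holds since $j \le i-1$. Hence $s_b = T_i[j-|b|+1..i]$ is closed with longest border $b$. When $b = \varepsilon$, $j = i-1$ and $s_b = T_i[i..i]$ has length one, so it is closed trivially. Right-maximality holds because $s_b$ is a suffix. For left-maximality I apply Lemma~\ref{lem:closed_maximality} with $\ell = j-|b|+1$, $r = i$ and $m = |b|$. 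I need either $\ell = 0$ or $T[\ell-1] \ne T[i-|b|]$. Suppose instead that $\ell \ge 1$ and $T[\ell-1] = T[i-|b|]$. Then $b^+$ occurs ending at $j$, so $\lastPos_{T_{i-1}}(b^+) \ge j$. Any occurrence of $b^+$ in $T_{i-1}$ contains an occurrence of $b$ ending at the same position, so $\lastPos_{T_{i-1}}(b^+) \le j$. Thus the two values are equal, contradicting $b \in \mathcal{B}_i$. If $b^+$ is undefined ($k=0$, i.e.\ $b = T_i$), then $b$ cannot occur in $T_{i-1}$, so this case is vacuous.

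\textbf{Converse.} Let $s = T_i[\ell..i]$ be an MCSuf with longest border $b$ of length $m$.
\begin{itemize}
\item If $|s| = 1$, then $b = \varepsilon$ and $s = T_i[i..i]$. Left-maximality gives $T[i-1] \ne T[i]$. Hence $\lastPos_{T_{i-1}}(T_i[i..i]) \ne i-1 = \lastPos_{T_{i-1}}(\varepsilon)$, so $\varepsilon \in \mathcal{B}_i$, and $s_\varepsilon = s$.
\item Otherwise, $b$ occurs exactly twice in $s$: as a prefix ending at $j = \ell+m-1 \le i-1$, and as a suffix. Since these are consecutive and no later occurrence of $b$ exists before position $i$, we get $\lastPos_{T_{i-1}}(b) = j$, so $s = s_b$. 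By Lemma~\ref{lem:closed_maximality}, left-maximality means $\ell = 0$ or $T[\ell-1] \ne T[i-m]$.
  \begin{itemize}
  \item Suppose $\lastPos_{T_{i-1}}(b^+) = j$. Then $b^+$ ends at $j$, which forces $\ell \ge 1$ and $T[\ell-1] = T[i-m]$, a contradiction.
  \item Any occurrence of $b^+$ in $T_{i-1}$ yields an occurrence of $b$ in $T_{i-1}$, so $\lastPos_{T_{i-1}}(b^+) \le j$.
  \end{itemize}
  Hence $\lastPos_{T_{i-1}}(b^+)$ is either $\bot$ or strictly less than $j$, and in both cases $b \in \mathcal{B}_i$. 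The boundary case $k \ge 1$ holds because $|b| < |s| \le i+1$.
\end{itemize}

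The main obstacle is bookkeeping rather than ideas: handling uniformly the empty border, the case $\ell = 0$, and the case where $b^+$ does not occur (value $\bot$), and confirming that consecutiveness in $T_i$ follows from $\lastPos$ in $T_{i-1}$ even when the two occurrences are adjacent or nearly overlapping. I would treat $b = \varepsilon$ separately first and then run the general argument.
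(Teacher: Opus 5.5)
Your proof is correct and follows essentially the paper's argument. Via Lemma~\ref{lem:lcst_Di_lastpos_change} you reduce to the condition $\lastPos_{T_{i-1}}(b) \ne \lastPos_{T_{i-1}}(b^+)$, get closedness and the longest border from the consecutiveness of the occurrences of $b$ ending at $j$ and $i$, and get left-maximality from Lemma~\ref{lem:closed_maximality} by noting that $b^+$ ends at $j$ exactly when both occurrences are preceded by the same character. Two cosmetic fixes: the two occurrences may overlap, and what you need (and actually check) is only $j < i$; and your $|s| = 1$ bullet uses $T[i-1]$, which presumes $i \ge 1$, although for $i = 0$ the conclusion still holds because $\lastPos_{T_{-1}}(T_0[0..0]) = \bot$.
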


\begin{proof}
When $i = 0$, we regard $T_{-1}$ as the empty string and have
$\mathcal{B}_0 = \{\varepsilon\}$.
For $b = \varepsilon$, we have $j = \lastPos_{T_{-1}}(\varepsilon) = -1$.
Hence $s_b = T_0[0..0] = \$$.
This is the only MCSuf of $T_0$, and its longest border is $\varepsilon$.
Thus the claim holds for $i = 0$.
We now assume that $i \geq 1$.

We first handle the case $b = \varepsilon$.
In this case, the condition in Lemma~\ref{lem:lcst_Di_lastpos_change} is $\lastPos_{T_{i-1}}(\varepsilon) \ne \lastPos_{T_{i-1}}(T_i[i..i])$.
Since $\lastPos_{T_{i-1}}(\varepsilon) = i-1$, this is equivalent to $T_i[i-1] \ne T_i[i]$.
Therefore, $\varepsilon \in \mathcal{B}_i$ if and only if $T_i[i-1] \ne T_i[i]$.
On the other hand, the suffix $T_i[i..i]$ is closed with the longest border $\varepsilon$ and is right-maximal.
It is an MCSuf of $T_i$ if and only if it is left-maximal.
This is equivalent to the length-two string $T_i[i-1..i]$ not being closed, which is equivalent to $T_i[i-1] \ne T_i[i]$.
Thus, $T_i[i..i]$ is an MCSuf of $T_i$ if and only if $T_i[i-1] \ne T_i[i]$.
By combining the two equivalences above, we obtain $\varepsilon \in \mathcal{B}_i$ if and only if $T_i[i..i]$ is an MCSuf of $T_i$.
Moreover, for $b = \varepsilon$, the substring $s_b$ defined in the statement is $T_i[\lastPos_{T_{i-1}}(\varepsilon) - |\varepsilon| + 1..i] = T_i[i..i]$.
Conversely, every closed suffix whose longest border is $\varepsilon$ has length $1$, and hence is $T_i[i..i]$.
Therefore the claim holds when the longest border is empty.

Let $b \in \mathcal{B}_i$ be a non-empty word and let $j = \lastPos_{T_{i-1}}(b)$.
By Lemma~\ref{lem:lcst_Di_lastpos_change}, we can write $b = T_i[k..i]$ for some $1 \le k \le i$, and $b$ satisfies
\[
    \lastPos_{T_{i-1}}(b) \ne
    \lastPos_{T_{i-1}}(T_i[k-1..i]).
\]
No occurrence of $b$ in $T_{i-1}$ ends after $j$.
Thus the two occurrences of $b$ ending at $j$ and $i$ are consecutive in $T_i$.
A border of $s_b$ longer than $b$ would yield another occurrence of $b$ between these two occurrences,
contradicting their consecutiveness.
Therefore $s_b = T_i[j - |b| + 1..i]$ is a closed suffix of $T_i$ whose longest border is $b$.
As $s_b$ is a suffix of $T_i$, it is right-maximal.
Thus, $s_b$ is an MCSuf if and only if it is left-maximal.
We then show the left-maximality of $s_b$ below.

Let $b' = T_i[k-1..i]$.
The two occurrences of $b$ ending at $j$ and $i$ are preceded by the same character if and only if $b'$ occurs in $T_{i-1}$ ending at $j$.
Since no occurrence of $b$ in $T_{i-1}$ ends after $j$, this is equivalent to $\lastPos_{T_{i-1}}(b') = j$.
Since $j = \lastPos_{T_{i-1}}(b)$, the inequality $\lastPos_{T_{i-1}}(b) \ne \lastPos_{T_{i-1}}(b')$ is equivalent to $\lastPos_{T_{i-1}}(b') \ne j$.
Thus, the two consecutive occurrences of the longest border $b$ are not both preceded by the same character.
By Lemma~\ref{lem:closed_maximality}, this is equivalent to the left-maximality of $s_b$.
Therefore, $s_b$ is an MCSuf of $T_i$ whose longest border is $b$.

Conversely, let $s$ be an MCSuf of $T_i$ whose longest border $b$ is non-empty.
Since $s$ is a closed suffix of $T_i$, the two occurrences of $b$ in $s$ are consecutive.
Let $j$ be the ending position of the prefix occurrence of $b$ in $s$.
Then $j = \lastPos_{T_{i-1}}(b)$ and $s = T_i[j - |b| + 1..i]$.

We next show that $b \in \mathcal{B}_i$.
Write $b = T_i[k..i]$ and $b' = T_i[k-1..i]$.
Since $s$ is left-maximal, the two occurrences of $b$ ending at $j$ and $i$ are not both preceded by the same character by Lemma~\ref{lem:closed_maximality}.
As above, this is equivalent to $\lastPos_{T_{i-1}}(b') \ne j$.
Thus $b$ satisfies the condition in Lemma~\ref{lem:lcst_Di_lastpos_change}, and hence $b \in \mathcal{B}_i$.
Together with the case in which the longest border is empty, this proves that every MCSuf of $T_i$ is obtained in the claimed way by choosing its longest border from $\mathcal{B}_i$.
\end{proof}

\begin{figure}[t]
    \centering
    \includegraphics[width=0.85\linewidth]{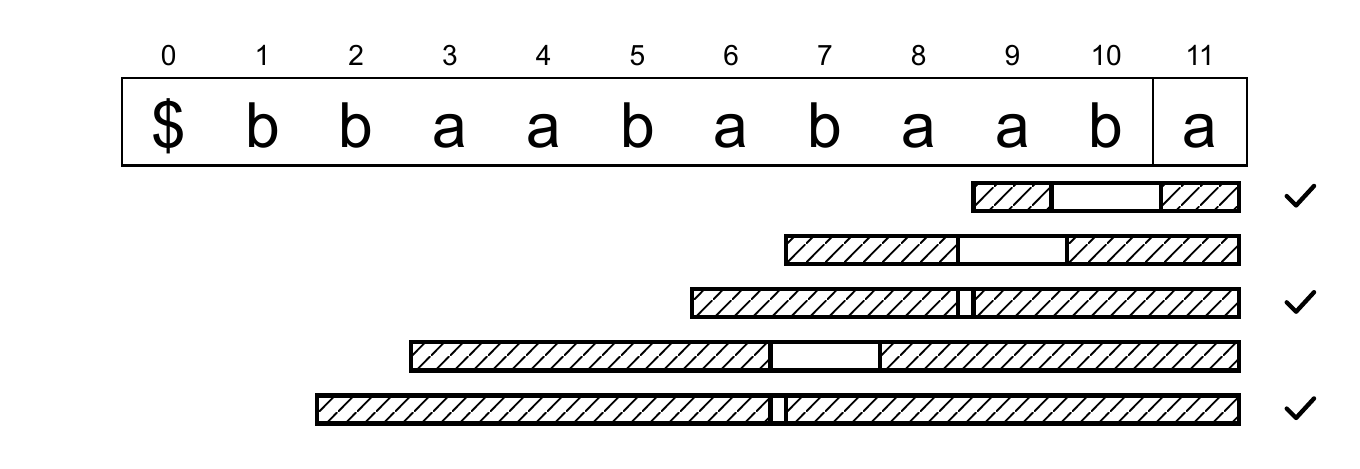}
    \caption{
    An illustration of the closed suffixes of $T_i = \mathtt{\$bbaababaaba}$ where $i = 11$.
    Each rectangle represents a closed suffix, and the hatched regions indicate its longest border.
    A check mark on the right marks a left-maximal closed suffix.
    From top to bottom, the corresponding values of $\lastPos_{T_{i-1}}$ are $9$, $8$, $8$, $6$, and $6$.
    In this example, the longest borders of the MCSufs $\mathtt{aba}$, $\mathtt{abaaba}$, and $\mathtt{baababaaba}$ have lengths $1$, $3$, and $5$, respectively.
    }
    \label{fig:mcsuf_correspondence}
\end{figure}

Fig.~\ref{fig:mcsuf_correspondence} illustrates an example of closed suffixes and their relationship with the $\lastPos_{T_{i-1}}$ values.

Combining Lemma~\ref{lem:mcsuf_lastpos_characterization} with the definition of $\mathcal{B}_i$ yields the following correspondence between $D_i$ and MCSufs.
\begin{corollary}\label{cor:Di_MCSuf_borders_new}
For every edge $(u, v) \in D_i$, let $j = \lastPos_{T_{i-1}}(\str(u)^R)$.
Then $T_i[j - |\str(u)| + 1..i]$ is an MCSuf of $T_i$ whose longest border is $\str(u)^R$.
Conversely, every MCSuf of $T_i$ is obtained in exactly one such way from an edge $(u, v) \in D_i$.
\end{corollary}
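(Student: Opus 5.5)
The corollary is essentially a reformulation of Lemma~\ref{lem:mcsuf_lastpos_characterization} in terms of edges instead of strings. I would prove it by composing the surjection $D_i \to \mathcal{B}_i$, $(u,v)\mapsto \str(u)^R$, with the bijection between $\mathcal{B}_i$ and the MCSufs of $T_i$ given by that lemma. The one genuinely new point is the word ``exactly one'': I must show that this first map is injective, so that no two edges of $D_i$ produce the same MCSuf.

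\emph{Forward direction.} Take $(u,v)\in D_i$ and set $b=\str(u)^R$, so $b\in\mathcal{B}_i$ by definition of $\mathcal{B}_i$. Since $|b|=|\str(u)|$, the string $T_i[j-|\str(u)|+1..i]$ with $j=\lastPos_{T_{i-1}}(b)$ is exactly $s_b$. Lemma~\ref{lem:mcsuf_lastpos_characterization} then says $s_b$ is an MCSuf of $T_i$ with longest border $b$.

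\emph{Converse.} Let $s$ be an MCSuf of $T_i$. By Lemma~\ref{lem:mcsuf_lastpos_characterization}, its longest border $b$ lies in $\mathcal{B}_i$ and $s=s_b$. By definition of $\mathcal{B}_i$, there is an edge $(u,v)\in D_i$ with $\str(u)^R=b$, and the forward direction applied to this edge yields $s$.

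\emph{Uniqueness.} Suppose two edges $(u,v),(u',v')\in D_i$ yield the same MCSuf $s$. The construction attaches to $s$ the longest border $\str(u)^R$, which is determined by $s$, so $\str(u)=\str(u')$. Distinct nodes of a suffix tree represent distinct strings, hence $u=u'$. The edges of $D_i$ all lie on the single root-to-$\ell_i$ path, and on a path each node has at most one outgoing edge along it. Therefore $v=v'$.

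This injectivity step is the only potential obstacle, and it reduces to the observation that $D_i$ is a set of edges on one root-to-leaf path.

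The remaining bookkeeping concerns the empty border $b=\varepsilon$, with $u$ the root, and the case $i=0$. Both are already handled inside Lemma~\ref{lem:mcsuf_lastpos_characterization}, with the convention $\lastPos(\varepsilon)=i-1$. So once that lemma is invoked, nothing further is needed.
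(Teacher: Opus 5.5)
Your proposal is correct and follows the paper's own argument: compose the map $D_i \to \mathcal{B}_i$, $(u,v)\mapsto \str(u)^R$, with the bijection between $\mathcal{B}_i$ and the MCSufs of $T_i$ from Lemma~\ref{lem:mcsuf_lastpos_characterization}. Your explicit injectivity check fills in a step the paper leaves implicit when it says Lemma~\ref{lem:lcst_Di_lastpos_change} ``identifies'' edges with strings; that check uses the facts that distinct suffix-tree nodes represent distinct strings and that $D_i$ lies on the single root-to-$\ell_i$ path.
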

Lemma~\ref{lem:lcst_Di_lastpos_change} identifies the edges in $D_i$ with the strings in $\mathcal{B}_i$,
and Lemma~\ref{lem:mcsuf_lastpos_characterization} identifies each string in $\mathcal{B}_i$ with exactly one MCSuf of $T_i$.
Combining these two bijections gives the one-to-one correspondence between the edges in $D_i$ and the MCSufs of $T_i$ stated in the corollary.

The above corollary reduces the online computation of MCSufs to reporting the edges in $D_i$ and the previous ending position associated with each upper endpoint.
Since this information can be computed while updating $\LCST(T)$, we obtain the following result.
\begin{lemma}\label{lem:online_mcsuf_time}
For any string $T$ of length $n$ given online by character appends, the MCSufs of the current text can be computed after every append in total $O(n \log n)$ time and $O(n)$ working space.
\end{lemma}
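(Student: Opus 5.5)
The plan is to maintain $\LCST(T)$ online as in Theorem~\ref{thm:lcst_complexity} and to read the MCSufs of $T_i$ off the edges of $D_i$ while they are reported, using Corollary~\ref{cor:Di_MCSuf_borders_new}. Concretely, after $T[i]$ is appended, we perform the update from $\LCST(T_{i-1})$ to $\LCST(T_i)$. By Theorem~\ref{thm:lcst_complexity}, the expose step reports every edge $(u,v) \in D_i$ together with $j = \lastPos_{T_{i-1}}(u)$. The string depth $|\str(u)|$ is stored at every suffix tree node, as is standard in Weiner's construction.

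For each reported pair we output the interval $[\,j - |\str(u)| + 1,\, i\,]$. This needs $O(1)$ extra work per reported edge. By Corollary~\ref{cor:Di_MCSuf_borders_new}, the map from edges of $D_i$ to MCSufs of $T_i$ is a bijection, so the output is correct and contains no duplicates. The boundary case $u = r$, meaning the empty border, gives $j = i-1$ and the interval $[i,i]$, which matches Lemma~\ref{lem:mcsuf_lastpos_characterization}. The case $i = 0$ is handled by the initial tree, where $D_0$ is interpreted as the single root-to-$\ell_0$ edge.

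For the complexity, maintaining $\LCST$ and reporting all $D_i$ with their $\lastPos$ values costs total $O(n\log n)$ time and $O(n)$ space by Theorem~\ref{thm:lcst_complexity}. Our extra work is $O(\sum_i |D_i|)$. This sum is bounded by the running time itself, because each edge of $D_i$ is processed by one splice during the expose, and the standard link-cut analysis bounds the total number of splices (preferred-child changes) by $O(n \log n)$ amortized. Alternatively, $\sum_i |D_i|$ is the total number of MCSufs, which is at most the number of MCSs over all prefixes, but the splice bound is the cleaner argument. The output is streamed rather than stored, so working space stays $O(n)$.

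The main subtlety I expect is justifying that $\sum_i |D_i| = O(n\log n)$ is covered by the link-cut analysis. The heavy-light argument gives $O(\log n)$ amortized preferred-child changes per expose over the $O(n)$ tree operations, including the edge splits and leaf insertions, which behave like link operations. I would invoke that bound directly.
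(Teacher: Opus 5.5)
Your proposal is correct and is essentially the paper's proof: you maintain $\LCST(T)$ via Theorem~\ref{thm:lcst_complexity}, and for each reported edge $(u,v)\in D_i$ with $j=\lastPos_{T_{i-1}}(u)$ you output $T_i[j-|\str(u)|+1..i]$ in $O(1)$ time, with correctness and absence of duplicates coming from Corollary~\ref{cor:Di_MCSuf_borders_new}. Your splice-count bound on $\sum_i |D_i|$ only makes explicit what that theorem already guarantees; your aside bounding $\sum_i |D_i|$ by ``the number of MCSs over all prefixes'' would not by itself give $O(n\log n)$ (the right observation, as in the paper's remark, is that the reported MCSufs are distinct left-maximal closed substrings of $T$), but you do not rely on it.
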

\begin{proof}
We maintain $\LCST(T)$ online by Theorem~\ref{thm:lcst_complexity}.
By Corollary~\ref{cor:Di_MCSuf_borders_new}, the MCSufs created at the update from $T_{i-1}$ to $T_i$ are in one-to-one correspondence with the edges in $D_i$.
Moreover, we can obtain $D_i$ together with the value $j = \lastPos_{T_{i-1}}(u)$ for each $(u, v) \in D_i$, which is the ending position of the second-rightmost occurrence of $\str(u)^R$ in $T_i$.
If each vertex $u$ stores its string depth $|\str(u)|$, then the corresponding MCSuf is obtained as $T_i[j - |\str(u)| + 1..i]$ in constant additional time.
Thus all MCSufs are computed within the total $O(n \log n)$ time for maintaining $\LCST(T)$.
The $\LCST$ uses $O(n)$ space by Theorem~\ref{thm:lcst_complexity}, and the string depths require only constant additional space per vertex.
The total space is therefore $O(n)$.
\end{proof}
The MCSufs obtained at all steps correspond to left-maximal closed substrings of $T$.
The number of such substrings is known to be $O(n \log n)$~\cite{BadkobehLFP26,closedrepeats}.
Therefore, the total running time above is consistent with the output-size lower bound.

We now use the online MCSuf computation above to maintain all MCSs.
We store the current MCSs in an array $\mathsf{MCS}$.
A record of $\mathsf{MCS}$ is a tuple $(p, q, m)$, where $T[p..q]$ is an MCS and $m$ is the length of its longest border.
The records are kept in increasing order of $q$.
\begin{theorem}\label{thm:online_mcsset_time}
For a string $T$ of length $n$ given online by character appends, the array $\mathsf{MCS}$ can be maintained online in total $O(n \log n)$ time and $O(n + |\mathsf{MCS}|)$ space.
\end{theorem}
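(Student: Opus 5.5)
The plan is to maintain $\mathsf{MCS}$ incrementally. When $T[i]$ is appended, compare the MCSs of $T_{i-1}$ with those of $T_i$ and touch only the records that can change. The first step is a structural observation based on Lemma~\ref{lem:closed_maximality}. Let $T[p..q]$ be an MCS of $T_{i-1}$ with longest border length $m$.

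\emph{Case $q < i-1$.} Closedness and left-maximality depend only on characters inside $T_{i-1}$. Right-maximality depends on $T[p+m]$ and $T[q+1]$, which both already exist. Hence $T[p..q]$ remains an MCS of $T_i$.

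\emph{Case $q = i-1$.} The record is an MCSuf of $T_{i-1}$. It stays closed and left-maximal, and by Lemma~\ref{lem:closed_maximality} it stays right-maximal in $T_i$ if and only if $T[p+m] \ne T[i]$.

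\emph{New MCSs.} Every MCS of $T_i$ with $q = i$ is an MCSuf of $T_i$. By Corollary~\ref{cor:Di_MCSuf_borders_new} and Lemma~\ref{lem:online_mcsuf_time}, these are exactly the strings $T_i[j-|\str(u)|+1..i]$ for $(u,v)\in D_i$ with $j=\lastPos_{T_{i-1}}(u)$, and their longest border length is $|\str(u)|$. So the MCSs of $T_i$ are the old records, minus the MCSufs of $T_{i-1}$ that fail the test $T[p+m]\ne T[i]$, plus the records for edges in $D_i$.

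The update procedure follows directly. The array $\mathsf{MCS}$ is sorted by $q$, so all records with $q = i-1$ form a contiguous tail. First, scan this tail from the end. For each record, test $T[p+m] \ne T[i]$ in $O(1)$ time, storing $T$ in an array. Keep the survivors and compact them in place. Second, append one record $(j-|\str(u)|+1,\, i,\, |\str(u)|)$ for each edge $(u,v)\in D_i$ reported during the update of $\LCST(T)$ by Theorem~\ref{thm:lcst_complexity}. All of these records share the same $q = i$, so increasing order of $q$ is preserved. $\mathsf{MCS}$ is implemented as a dynamic array with doubling, giving amortized $O(1)$ appends and $O(|\mathsf{MCS}|)$ space. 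Together with the $O(n)$ space of $\LCST(T)$ and the text, this gives the claimed $O(n+|\mathsf{MCS}|)$ space bound.

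The main point to verify is the time bound for the scan. The tail scanned at step $i$ has size equal to the number of MCSufs of $T_{i-1}$, which by Corollary~\ref{cor:Di_MCSuf_borders_new} is $|D_{i-1}|$. The number of records appended at step $i$ is $|D_i|$. Hence the total work outside the $\LCST$ maintenance is $O(\sum_i |D_i|)$. Each edge in $D_i$ is reported by Theorem~\ref{thm:lcst_complexity}, so $\sum_i |D_i|$ is bounded by its total $O(n\log n)$ reporting time; alternatively, one can cite the $O(n\log n)$ bound on left-maximal closed substrings~\cite{BadkobehLFP26,closedrepeats}, since every MCSuf is one. Adding the $O(n\log n)$ time for maintaining $\LCST(T)$ yields the theorem.
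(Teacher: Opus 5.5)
Your proposal is correct and follows essentially the same route as the paper. Old records can disappear only from the contiguous tail of MCSufs of $T_{i-1}$, via the test $T[p+m] = T[i]$ from Lemma~\ref{lem:closed_maximality}, and the new records are exactly the MCSufs of $T_i$ read off from $D_i$. Bounding the scanned tail by $|D_{i-1}|$ is a restatement of the paper's argument that each appended record is checked for deletion only once, so the extra work is $O(\sum_i |D_i|) = O(n \log n)$.
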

\begin{proof}
Suppose that $T[i]$ is appended and that, before this append, $\mathsf{MCS}$ represents the MCSs of $T_{i-1}$.
Appending a character to the end does not change the closedness or left-maximality of any substring of $T_{i-1}$.
Thus, an old MCS can be removed only when it is an MCSuf of $T_{i-1}$ and loses right-maximality by appending the new character $T[i]$.
Also, every newly created MCS must end at position $i$, and hence is an MCSuf of $T_i$.

Before appending the new MCSufs of $T_i$ to $\mathsf{MCS}$, we scan $\mathsf{MCS}$ backward while the end position is $i-1$.
These records are exactly the MCSufs of $T_{i-1}$.
For each checked record $(p, i-1, m)$, let $s' = T[p..i-1]$.
By Lemma~\ref{lem:closed_maximality}, $s'$ is not right-maximal in $T_i$ exactly when $T[p + m] = T[i]$.
If this equality holds, we delete the record from $\mathsf{MCS}$.

After this deletion step, we compute the MCSufs of $T_i$ by Lemma~\ref{lem:online_mcsuf_time} and append their records to $\mathsf{MCS}$.
By induction, $\mathsf{MCS}$ stores exactly the MCSs of $T_i$ after processing $T[i]$.

Each record appended to $\mathsf{MCS}$ is checked for deletion at most once, when the next character is appended.
Hence, the total time spent on deletions and appends is linear in the number of appended records.
By Lemma~\ref{lem:online_mcsuf_time}, the total number of MCSufs reported over all steps is $O(n \log n)$, and they are computed in total $O(n \log n)$ time.
Computing MCSufs takes $O(n)$ space by Lemma~\ref{lem:online_mcsuf_time}, and the array $\mathsf{MCS}$ consumes $O(|\mathsf{MCS}|)$ space.
Therefore, the set of MCSs can be maintained online in total $O(n \log n)$ time and $O(n + |\mathsf{MCS}|)$ space.
\end{proof}
 \section{Other applications of link-cut suffix trees}\label{sec:application}

This section presents further applications of $\LCST$.
The common ingredient is the rightmost occurrence information maintained on the suffix tree.
Using this information, we obtain simple online algorithms for variants of LZ77 factorization and for most recent match queries.

\subsection{Rightmost LZ77 factorization}
A sequence $(s_1, \ldots, s_k)$ of strings is called a \emph{factorization} of $T$
if $T = s_1 \cdots s_k$ holds.
Each string in the factorization is called a \emph{phrase}.
A factorization $(f_1, \ldots, f_z)$ of $T$ satisfying the following condition is called the \emph{LZ77 factorization}~\cite{ZivL77} of $T$.
For each $1 \leq i \leq z$, let $p_i = |f_1 \cdots f_{i-1}|$ be the beginning position of the $i$th phrase.
For $i > 1$, let $\ell_i$ be the maximum length of a common prefix of $T[p_i..|T|-1]$ and $T[r..|T|-1]$ over all positions $0 \leq r < p_i$.
If $i = 1$ or $\ell_i = 0$, then $f_i = T[p_i]$.
Otherwise, $f_i = T[p_i..p_i + \ell_i - 1]$, and any position $r_i$ that attains this maximum common-prefix length is called a \emph{reference} of $f_i$.
If the rightmost reference position is always chosen among multiple reference positions, the resulting factorization is called the \emph{rightmost LZ77 factorization} (or rmLZ for short).

The rightmost LZ77 factorization has been studied in offline, online, and sliding-window settings~\cite{FerraginaNV13,BelazzouguiP16,BilleCFG17,SumiyoshiMI24}.
For the online setting, Sumiyoshi et al.~\cite{SumiyoshiMI24} gave an exact algorithm that computes the rightmost LZ77 factorization in $O(n(\log \sigma + \log n / \log \log n))$ time in total and $O(n)$ space, using BP-linked suffix trees and dynamic range-query structures.
Their method also supports the sliding-window setting with analogous bounds.
Our result below is slower, but it follows directly from the rightmost-occurrence information maintained by $\LCST$.
\begin{proposition}\label{prop:online_rmlz}
Using the LCST, the rmLZ factorization of string $T$ can be maintained online in total $O(n \log n)$ time and $O(n)$ space.
\end{proposition}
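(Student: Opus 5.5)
The plan is to run the standard greedy online LZ77 parser on top of $\LCST(T)$ maintained by Theorem~\ref{thm:lcst_complexity}. We keep the start position $p$ of the phrase currently being built and a tentative reference $r$. After appending $T[i]$, we use Corollary~\ref{cor:lcst_Di_second_rightmost} to compute $j = \lastPos_{T_{i-1}}(T_i[p..i])$, which tells us whether the current phrase can be extended.

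\textbf{Correctness of the reference test.} An occurrence of $T[p..i]$ ending at $j \le i-1$ starts at $j-(i-p) < p$. Conversely, any occurrence starting at some $r' < p$ ends before $i$ and therefore lies in $T_{i-1}$. So $T[p..i]$ has a valid (possibly overlapping) reference iff $j \ne \bot$. When it does, the rightmost such reference is exactly $r = j - (i-p)$, because $\lastPos$ selects the rightmost ending position, equivalently the rightmost starting position among occurrences of a fixed length.

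\textbf{Parsing step.} After appending $T[i]$, we compute the reported tuples $(u,|\str(u)|,\lastPos_{T_{i-1}}(u))$ for $(u,v)\in D_i$ from Theorem~\ref{thm:lcst_complexity}, and then:
\begin{enumerate}
\item If $j = \lastPos_{T_{i-1}}(T_i[p..i]) \ne \bot$, we extend the phrase and set $r = j-(i-p)$.
\item Otherwise, if $p<i$, we close the phrase $T[p..i-1]$ with the reference $r$ stored at step $i-1$. That reference is $\lastPos_{T_{i-2}}(T[p..i-1]) - (i-1-p)$, which is the rightmost reference by the remark above. We then restart with $p=i$ and query $\lastPos_{T_{i-1}}(T_i[i..i])$ using the same tuple set. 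If this value is $\bot$, the character $T[i]$ is a literal phrase and we set $p=i+1$; otherwise we keep $p=i$ and the new $r$.
\end{enumerate}
Maximality of each phrase follows because we stop exactly at the first extension that has no occurrence starting before $p$. Since $p$ is fixed while a phrase is being extended, greedy online extension yields exactly the phrase $f = T[p..p+\ell-1]$ of the definition. At the end of the input, the pending phrase is output with its current reference.

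\textbf{Complexity.} Each step costs $O(1)$ plus $O(|D_i|)$ for at most two queries via Corollary~\ref{cor:lcst_Di_second_rightmost}, on top of the cost of maintaining the $\LCST$. Space is $O(n)$ for the $\LCST$, plus $O(|D_i|)$ temporary space for the tuples of the current step, plus $O(1)$ per output phrase. The step I expect to be the main obstacle is bounding $\sum_i |D_i| = O(n\log n)$. Theorem~\ref{thm:lcst_complexity} reports every edge of every $D_i$ within total $O(n\log n)$ time, so the total number of reported edges is at most the total running time, $O(n\log n)$. If a self-contained argument is preferred, I would invoke the standard heavy/light-edge argument of Sleator and Tarjan: over $O(n)$ expose and link operations on trees of $O(n)$ nodes, the number of preferred-child changes (dashed-to-solid conversions) is $O(n\log n)$. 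Summing gives total $O(n\log n)$ time.
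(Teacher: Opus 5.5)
Your proposal is correct and takes essentially the same route as the paper. After each append you obtain $\lastPos_{T_{i-1}}(T_i[p..i])$ from the reported set $D_i$ via Corollary~\ref{cor:lcst_Di_second_rightmost}, extend or close the last phrase accordingly, and bound the total cost by $\sum_i |D_i| = O(n\log n)$, which follows from Theorem~\ref{thm:lcst_complexity} exactly as you argue. You are slightly more explicit than the paper about the second query for the reference of a newly started phrase and the equivalence between occurrences in $T_{i-1}$ and references starting before $p$. The paper, in turn, also handles a text that does not begin with the sentinel $\$$ by prepending it and discarding the first phrase.
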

\begin{proof}
We assume that $T$ starts with the unique sentinel character $\$$.
Otherwise, we prepend $\$$ to $T$ and discard the first phrase after computing the rmLZ factorization.
For each current text $T_i$, we maintain $\LCST(T_i)$ and the rmLZ factorization of $T_i$.
For $T_0 = \$$, the factorization consists of the single phrase $T[0] = \$$.
Suppose that, before appending $T[i]$ for some $i \ge 1$, we have $T_{i-1} = f_1 \cdots f_k$ as the current rmLZ factorization.
Let $\ell = |f_k|$ and let $p = i - \ell$ be the beginning position of the last phrase.

After appending $T[i]$, all phrases except possibly the last one remain unchanged.
Thus, the update either extends $f_k$ by one character or appends a new phrase $f_{k+1} = T[i]$.
Let $w = T_i[p..i]$ be the suffix obtained by extending the previous last phrase.
The string $w$ has a valid reference exactly when it occurs in $T_{i-1}$.
Thus, we only need to test whether $w$ occurs in $T_{i-1}$.
When such an occurrence exists, the rightmost one is used as the reference.
More specifically, if $\lastPos_{T_{i-1}}(w) = \bot$, then $f_k$ cannot be extended.
Otherwise, the rightmost reference position is $\lastPos_{T_{i-1}}(w) - |w| + 1$.

The total time for maintaining $\LCST(T)$ and reporting all sets $D_i$ is $O(n \log n)$ by Theorem~\ref{thm:lcst_complexity}.
By Corollary~\ref{cor:lcst_Di_second_rightmost}, after the update from $T_{i-1}$ to $T_i$, the value $\lastPos_{T_{i-1}}(T_i[p..i])$ can be obtained in $O(|D_i|)$ time at each update, and the total time over all updates is $O(n \log n)$.
Therefore, the total running time is $O(n \log n)$.
The data structure stores only $\LCST(T)$ together with the current factorization and its references.
Thus, it uses $O(n)$ space.
\end{proof}

\subsection{Most Recent Match (MRM) problem}
The \emph{most recent match} (\emph{MRM} for short) problem is to construct a data structure over the online text $T$ that supports the following operations:
\begin{itemize}
  \item $\append(c)$: given a character $c \in \Sigma$, update the text $T$ to $Tc$.
  \item $\find(P)$: given a pattern $P \in \Sigma^*$, return the beginning position of the rightmost occurrence of $P$ in the text $T$.
\end{itemize}
Larsson~\cite{Larsson14} gave an online suffix-tree data structure for arbitrary pattern queries, using $O(n \log n)$ time in total and $O(n)$ space, with $O(m \log \sigma)$ query time for a pattern of length $m$.
We show that the same time and space bounds can be achieved by using $\LCST$.

\begin{proposition}\label{prop:mrm}
Suppose the string $T$ is given in an online manner.
Using the LCST, the MRM problem can be solved with $O(n \log n)$ total update time and $O(n)$ space, while supporting queries for a pattern $P$ of length $m$ in $O(|P| \log \sigma)$ time.
\end{proposition}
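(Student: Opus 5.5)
We maintain $\LCST(T)$ online by Theorem~\ref{thm:lcst_complexity}. This already gives the update bounds: $O(n\log n)$ total time and $O(n)$ space. The remaining work is to answer $\find(P)$. Since $\STree(T)$ is the suffix tree of $T^R$, an occurrence of $P$ in $T$ corresponds to the path from the root spelling $P^R$.

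\textbf{Locating the pattern.} We descend from the root of $\STree(T)$ reading the characters of $P^R$, that is, $P[m-1], P[m-2], \ldots, P[0]$. The suffix tree underlying Theorem~\ref{thm:weiner} stores its child pointers in a balanced search structure keyed by the first character of each edge, so each branching step costs $O(\log\sigma)$. Comparisons inside an edge label are done against $T$ through the stored positions and cost $O(1)$ per character. Hence we reach the locus of $P^R$, or detect a mismatch, in $O(m\log\sigma)$ time. If there is a mismatch, we return $\bot$. Otherwise let $v$ be the node at or immediately below this locus, i.e., the lower endpoint of the edge containing an implicit locus. Then $v$ has the same set of descendant leaves as the locus, and by the observation in the proof of Lemma~\ref{lem:lcst_solid_path_lastpos},
\[
\lastPos_T(P)=\max\{i \mid \ell_i \text{ is a descendant of } v\} = \lastPos_T(v).
\]
The answer is therefore $\lastPos_T(v)-m+1$. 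The empty pattern can be handled by convention.

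\textbf{Reading $\lastPos_T(v)$.} This is the main obstacle. By Lemma~\ref{lem:lcst_solid_path_lastpos}, $\lastPos_T(v)=j$, where $\ell_j$ is the end of the solid path containing $v$. In the link-cut tree, however, this label is stored only at the root of the auxiliary splay tree of that path. Reaching it by splaying costs only amortized $O(\log n)$, which is not within $O(m\log\sigma)$ for short patterns. A splay during a query would also interfere with the amortized update accounting.

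\textbf{Fix via Corollary~\ref{cor:lcst_edge_lastpos}.} By Theorem~\ref{thm:lcst_complexity}, every current dashed edge $(u,v)$ is kept together with the value $j$ such that the solid path containing $v$ ends at $\ell_j$, i.e., with $\lastPos_T(v)$. We store this value at the child endpoint $v$. The value never goes stale: a vertex's $\lastPos$ changes only when it lies on a newly exposed root-to-leaf path, and every edge on that path becomes solid, so no dashed edge survives with an outdated value. If the edge into $v$ is dashed, we read the stored value in $O(1)$ time.

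If the edge into $v$ is solid, the descent has just traversed it. In that case we let the descent carry along the value of the most recent dashed edge traversed. By Corollary~\ref{cor:lcst_edge_lastpos}, all vertices joined by solid edges share the same $\lastPos$. If no dashed edge has been traversed yet, then $v$ lies on the root's solid path, which by the expose step ends at $\ell_{n-1}$, so the value is $n-1$. In both cases the value is available in $O(1)$ extra time per step.

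\textbf{Bounds.} The query therefore takes $O(m\log\sigma)$ time and leaves the structure unchanged. The updates cost $O(n\log n)$ total time and use $O(n)$ space by Theorem~\ref{thm:lcst_complexity}.
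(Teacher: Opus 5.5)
Your proposal is correct and follows the paper's proof: you descend along $P^R$ in $O(m\log\sigma)$ time, carrying a value that starts at $n-1$ (the label of the root's solid path) and is replaced by the stored label whenever a dashed edge is entered, which by Lemma~\ref{lem:lcst_solid_path_lastpos} and Corollary~\ref{cor:lcst_edge_lastpos} equals $\lastPos_T(P)$ at the locus. Your added argument that dashed-edge labels never go stale correctly justifies what Theorem~\ref{thm:lcst_complexity} already provides.
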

\begin{proof}
We maintain $\LCST(T)$ together with the underlying suffix tree $\STree(T)$.
For every edge $(u, v)$ of $\STree(T)$ that is dashed in $\LCST(T)$, we store the value $j$ such that the solid path containing $v$ ends at $\ell_j$.
By Theorem~\ref{thm:lcst_complexity}, this information can be maintained online within $O(n \log n)$ total time and $O(n)$ space.

Given a query pattern $P$, we traverse $\STree(T)$ from the root according to $P^R$.
If $P^R$ has no locus in $\STree(T)$, then $P$ does not occur in $T$.
Otherwise, let $x$ be its locus.
We initialize a variable $q$ to $n-1$, which is the label of the solid path containing the root.
During the traversal from the root to $x$, whenever we enter a dashed edge $(u, v)$, we update $q$ to the label stored on $(u, v)$.
By Lemma~\ref{lem:lcst_solid_path_lastpos}, the value $q$ at the locus $x$ is exactly $\lastPos_T(P)$, because $q$ is the label of the solid path containing $x$.
Thus the rightmost occurrence of $P$ ends at $\lastPos_T(P)$ and starts at position $\lastPos_T(P) - |P| + 1$.

The traversal follows at most $m$ characters in the suffix tree, and each branching step costs $O(\log \sigma)$ time.
Since the value $q$ can be maintained during the traversal within the same time bound, each query takes $O(m \log \sigma)$ time.
\end{proof}

\subsection{Non-overlapping rightmost LZ77 factorization}
The \emph{non-overlapping LZ77 factorization} is defined as the LZ77 factorization in which the comparison string $T[r..|T|-1]$ is replaced by $T[r..p_i-1]$.
For each phrase beginning at $p_i = |f_1 \cdots f_{i-1}|$, let $\ell_i$ be the maximum length of a common prefix of $T[p_i..|T|-1]$ and $T[r..p_i-1]$ over all positions $0 \leq r < p_i$.
The next phrase is determined from this value in the same way as in the LZ77 factorization.
The rightmost variant chooses the largest reference position among the positions giving length $\ell_i$.
Non-overlapping LZ77 factorization itself has been studied repeatedly~\cite{CrochemoreT11,OhlebuschW19,Koppl21}.
In contrast, to the best of our knowledge, the online rightmost variant under the non-overlapping constraint has not been studied.

We combine the LCST with a \emph{directed acyclic word graph (DAWG)}~\cite{BlumerBEHM84} to obtain the following result.
\begin{restatable}{theorem}{onlinenonoverlappingrmlz}\label{thm:online_nonoverlapping_rmlz}
For a string $T$ of length $n$, the non-overlapping rmLZ factorization of $T$ can be computed online in total $O(n \log n)$ time and $O(n)$ space
by using the LCST.
\end{restatable}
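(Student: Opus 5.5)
We maintain, as in the proof of Proposition~\ref{prop:online_rmlz}, the current factorization $f_1\cdots f_k$ of $T_{i-1}$ with last phrase $f_k = T[p..i-1]$, together with $\LCST(T)$. After appending $T[i]$, the question is whether $w = T[p..i]$ has an occurrence ending at or before $p-1$. If so, $f_k$ is extended, and its rightmost reference is the start of the rightmost such occurrence. If not, a new phrase $T[i]$ is started.

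The difficulty is that $\lastPos_T$ only gives the rightmost occurrence overall, and it may overlap position $p$. We therefore need the rightmost occurrence of $w$ that ends inside the frozen prefix $T_{p-1}$. The plan is to use the online DAWG of $T$ to represent $w$ as a state. The DAWG can be updated in amortized $O(\log\sigma)$ time per append, and the state of the current phrase can be followed by one transition per character. A failure of that transition signals that $w$ does not occur at all.

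The key intermediate object is a snapshot of the $\LCST$ at time $p-1$. The plan is to store $\lastPos_{T_{p-1}}$ information lazily, without copying the whole tree. During the current phrase, every substring of $T_{p-1}$ has a fixed rightmost occurrence in $T_{p-1}$. Vertices whose $\lastPos$ changed after time $p-1$ lie on the root-to-leaf paths exposed at steps $p,\dots,i$. Each edge in $D_{p'}$ for $p' \ge p$ is reported with its old label $\lastPos_{T_{p'-1}}(u)$ (Theorem~\ref{thm:lcst_complexity}). The first time after $p-1$ that a vertex $u$ becomes the upper endpoint of an edge in $D_{p'}$, we record its old label, which equals $\lastPos_{T_{p-1}}(u)$, in a per-phrase table. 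This table is cleared when the phrase ends, and clearing costs at most the number of recorded entries, so it is charged to the $O(n\log n)$ edges reported over all $D$'s.

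To answer the query for $w$, we first observe that if $\lastPos_T(w) - |w| + 1 < p$ and the occurrence ends before $p$, we are done. Otherwise we must find the last occurrence of $w$ ending before $p$. This equals the maximum leaf label $< p$ below the locus of $w^R$, which is the $\lastPos_{T_{p-1}}$ value of that locus. By the argument of Corollary~\ref{cor:lcst_Di_second_rightmost} applied across the phrase, this value is the recorded old label of the shallowest recorded vertex on the root-to-locus path at depth $\ge |w|$, or the current solid-path label if no vertex there was recorded.

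The main obstacle is locating the locus of $w^R$ in $\STree(T)$ and the relevant recorded vertex efficiently as $w$ grows by one character at a time. Since $w$ extends to the right, $w^R$ extends to the left, which is a Weiner-link move in $\STree$ or, equivalently, a DAWG transition. We plan to rely on the known node correspondence between the DAWG of $T$ and the suffix tree of $T^R$, so the DAWG state yields the suffix-tree node in $O(1)$. We then argue amortizedly that the shallowest recorded ancestor can be maintained with a link-cut tree path query in $O(\log n)$. Each phrase step performs $O(1)$ such queries, which keeps the total at $O(n\log n)$ time and $O(n)$ space.
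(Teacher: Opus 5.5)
\paragraph{The snapshot step fails.} Your claim that the first label recorded for $u$ after time $p-1$ equals $\lastPos_{T_{p-1}}(u)$ is false. $\lastPos(u)$ changes at \emph{every} step $p'\ge p$ at which $u$ lies on the exposed root-to-$\ell_{p'}$ path. When the edge leaving $u$ along that path is already solid, this change is silent, so a later first record stores a value $\ge p$. Nodes created by splits after time $p-1$ behave the same way.

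A concrete instance is $T=\mathtt{\$baaadybaaeybaaaaa}$. Its non-overlapping phrases begin $\mathtt{\$},\mathtt{b},\mathtt{a},\mathtt{a},\mathtt{d},\mathtt{y},\mathtt{baa},\mathtt{e},\mathtt{ybaa}$, so $p=15$, and the next phrase must be $\mathtt{aaa}$ with reference $2$.
\begin{itemize}
\item At step $15$ the split node $M$ for $\mathtt{baaa}$ is recorded with the correct value $4$.
\item At step $16$ the node $x$ for $\mathtt{aaa}$ is split off the now-solid edge above $M$ and recorded for the first time with label $15$.
\item At step $17$ the locus of $w^R$ for $w=\mathtt{aaa}$ is $x$. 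Your rule (shallowest recorded vertex of depth $\ge |w|$, whether you query before or after inserting $\ell_{17}$) therefore returns $15$, which is an occurrence overlapping $p$.
\end{itemize}
The correct value $4$ is stored only at the child $M$, which lies off every path exposed since step $15$.

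In general, the right record sits at the first dashed edge below the locus on the path to the \emph{smallest} new leaf in its subtree. Finding it is a descendant search, not the ``shallowest recorded ancestor'' path query you propose, and no $O(\log n)$ bound is established for it. Also, your DAWG is built over the whole text, so a successful transition only certifies an occurrence in $T_{i-1}$, possibly overlapping $p$. Even the decision whether to extend the phrase therefore rests on this broken query.

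\paragraph{The missing idea.} While $f_k$ grows you never need any text beyond $T[0..p_k-1]$, because every valid reference lies inside it. The paper therefore keeps $\LCST(T[0..p_k-1])$ and the DAWG transitions superimposed on it \emph{frozen} for the whole phrase. A failed transition from the locus of $f_k^R$ by $T[i]$ certifies exactly that $T[p_k..i]$ does not occur in $T[0..p_k-1]$. Then $\lastPos_{T[0..p_k-1]}(f_k)$ is the label of the solid path containing that locus, read off by walking up to the first dashed edge in $O(|f_k|)$ time. Only after the phrase is closed are its characters appended to the $\LCST$ and the DAWG, each character exactly once. Theorem~\ref{thm:lcst_complexity} then gives $O(n\log n)$ time and $O(n)$ space, with no snapshots or per-phrase tables.
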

\begin{proof}
Suppose that $T[0..i-1]$ has been processed and that the next character $T[i]$ is to be appended.
Let $p_k$ be the beginning position of the current last phrase, so that $T[0..p_k-1] = f_1 \cdots f_{k-1}$ and $f_k = T[p_k..i-1]$.

We maintain $\LCST(T[0..p_k-1])$ together with its underlying suffix tree $\STree(T[0..p_k-1])$.
For a locus representing a string $x$ in this suffix tree, its \emph{reverse suffix link} labeled by a character $c$ points to the locus of $cx$, if such a locus exists.
Recall that $\STree(T[0..p_k-1])$ is the suffix tree of $(T[0..p_k-1])^R$.
The locus of $x$ in this suffix tree represents the substring $x^R$ of $T[0..p_k-1]$ in the original orientation.
Hence the reverse suffix link from $x$ to $cx$ corresponds to the transition from $x^R$ to $x^R c$.
It is known that the graph formed by these transitions is the directed acyclic word graph (DAWG) of $T[0..p_k-1]$, which is the minimal deterministic automaton representing all substrings of the text~\cite{BlumerBEHM84,ChenS85}.
We superimpose these DAWG transitions on the corresponding suffix-tree loci.

During the computation, we maintain the locus $q$ of $f_k^R$ in $\STree(T[0..p_k-1])$, if $f_k$ occurs in $T[0..p_k-1]$.
If $f_k$ does not occur in $T[0..p_k-1]$, we set $q = \bot$.
For the first phrase, the represented text is empty and $q = \bot$.

Let $c = T[i]$.
If $q \ne \bot$ and the superimposed DAWG has a transition from $q$ by $c$, then the destination is the locus of $c f_k^R = (f_k c)^R$.
Equivalently, $f_k c = T[p_k..i]$ occurs in $T[0..p_k-1]$.
In this case, we extend the current phrase to $T[p_k..i]$ and update $q$ to this destination.
Otherwise, $T[p_k..i]$ does not occur in $T[0..p_k-1]$.
Since every valid reference for a phrase beginning at $p_k$ must be contained in $T[0..p_k-1]$, the longest valid phrase beginning at $p_k$ is $f_k = T[p_k..i-1]$.

It remains to find the rightmost valid reference of this phrase.
If $q = \bot$, then $f_k$ is a single-character phrase with no reference.
Otherwise, the value $\lastPos_{T[0..p_k-1]}(f_k)$ is obtained from the LCST information in the same way as in the proof of Proposition~\ref{prop:mrm}.
If $q$ is an implicit locus on an edge, let $q'$ be the lower endpoint of this edge.
Otherwise, let $q' = q$.
Starting at $q'$, the algorithm follows edges to parents until the first dashed edge is reached.
If such an edge is found, then the label stored for the solid path below this dashed edge is $\lastPos_{T[0..p_k-1]}(f_k)$.
If no dashed edge is found, then $q'$ belongs to the solid path containing the root, and the label of this solid path gives the same value.
The beginning position of the rightmost valid reference is therefore $\lastPos_{T[0..p_k-1]}(f_k) - |f_k| + 1$.

After determining $f_k$, we append the characters of $f_k$ to the $\LCST$ structure and update the superimposed DAWG transitions.
Both structures now represent $T[0..i-1]$.
We then start the next phrase from position $i$ by setting $f_{k+1} = T[i]$.
The new locus is obtained by the transition from the initial state by $T[i]$, if it exists.
If no such transition exists, we set the locus to $\bot$.
When the input ends, the remaining current phrase is determined in the same way.

We now analyze the time and space complexities.
The LCST is maintained in total $O(n \log n)$ time and $O(n)$ space by Theorem~\ref{thm:lcst_complexity}.
The superimposed DAWG transitions are updated within the same asymptotic update time~\cite{ChenS85}.
Each DAWG transition takes $O(\log \min \{\sigma, n\}) \subseteq O(\log n)$ time when outgoing transitions are stored in balanced binary search trees.
The algorithm performs only a constant number of DAWG transitions per input character.
It follows that all DAWG operations take total $O(n \log n)$ time and use $O(n)$ space.

For each phrase $f_k$, the upward traversal along suffix-tree edges for computing its rightmost valid reference takes $O(|f_k|)$ time.
Since the phrase lengths sum to $n$, the total traversal time is $O(n)$.
The total time is $O(n \log n)$, and the total working space is $O(n)$.
\end{proof}

\section*{Acknowledgments}
This work was
supported by JST BOOST, Japan Grant Number JPMJBS2406 (HS), and
by JSPS KAKENHI Grant Numbers JP24K20734 (TM) and JP23K24808 (SI).
\clearpage
\bibliographystyle{splncs04}
\bibliography{ref}

\end{document}